\documentclass[a4paper,USenglish,cleveref,autoref,thm-restate,numberwithinsect,nameinlink,pagebackref]{lipics-v2021}
\nolinenumbers

\usepackage[utf8]{inputenc}
\usepackage[T1]{fontenc}
\usepackage{amsmath, amsthm, amssymb}
\usepackage{mathtools}
\usepackage{tabularx}
\usepackage{booktabs}
\usepackage{multirow}
\usepackage{dsfont}
\usepackage[]{todonotes} %
\usepackage[numbers,sort]{natbib}
\usepackage{nicefrac}

\usepackage[nolist, nohyperlinks]{acronym}
\usepackage{xspace}
\usepackage[dvipsnames]{xcolor}

\usepackage{tikz}
\usetikzlibrary{matrix,fit,decorations.pathreplacing,shapes,patterns,patterns.meta,intersections,positioning,chains,calc,backgrounds,arrows.meta,math,decorations.markings,backgrounds}

\tikzstyle{mycircle}=[circle,draw=black,fill=black!25,fill opacity = 0.3,text opacity=1,inner sep=0pt,minimum size=18pt,font=\small]
\tikzstyle{smallcircle}=[circle, draw=black, fill=white, inner sep=0pt, minimum size=15pt, font=\tiny]
\tikzstyle{vertex_small}=[circle,draw,fill, inner sep=2pt]
\tikzset{
	position/.style args={#1:#2 from #3}{
		at=(#3.#1), anchor=#1+180, shift=(#1:#2)
	}
}

\definecolor{italyGreen}{RGB}{0, 146, 70}
\definecolor{italyRed}{RGB}{206, 43, 55}

\Crefname{theorem}{Thm{.}}{Thms{.}}
\crefname{theorem}{Theorem}{Theorems}

\newcommand{\NN}{\mathds{N}} %
\newcommand{\QQ}{\mathds{Q}} %
\newcommand{\ZZ}{\mathds{Z}} %

\newcommand{\bigO}{\mathcal{O}} %

\DeclareMathOperator{\classNP}{NP}
\DeclareMathOperator{\tw}{tw}
\DeclareMathOperator{\cw}{cw}
\DeclareMathOperator{\FPT}{FPT}

\DeclareMathOperator{\clos}{clos}
\DeclareMathOperator{\W}{W}
\DeclareMathOperator{\vi}{vi}
\DeclareMathOperator{\ml}{ml}

\newtheorem{reduction}{Reduction Rule}

\newcommand{\taup}{\tau_{\rho}\xspace}

\newcommand{\bdvd}{\textsc{BDVD}\xspace}
\newcommand{\BDVD}{\textsc{Bounded Density Vertex Deletion}\xspace}
\newcommand{\bded}{\textsc{Bounded Density Edge Deletion}\xspace}

\newcommand{\vc}{\textsc{Vertex Cover}\xspace}
\newcommand{\fvs}{\textsc{Feedback Vertex Set}\xspace}
\newcommand{\ds}{\textsc{Dissociation Set}\xspace}
\newcommand{\rubp}{\textsc{Restricted Unary Bin Packing}\xspace}
\newcommand{\bdd}{\textsc{Bounded Degree Vertex Deletion}\xspace}
\newcommand{\coc}{\textsc{Component Order Connectivity}\xspace}

\acrodef{mwl}[M$W$-L]{\textsc{Max $W$-Light}}

\newcommand{\degphi}[0]{\deg_\phi^-}
\newcommand{\Degphi}[0]{\Delta_\phi^-}
\newcommand{\degfrac}[1]{\deg_{#1}^-}
\newcommand{\Degfrac}[1]{\Delta_{#1}^-}

\newcommand{\degexphi}{\degfrac{(\phi, w)}}
\newcommand{\Degexphi}{\Degfrac{(\phi, w)}}

\usepackage{mathtools}
\newcommand{\ceq}{\ensuremath{\coloneqq}}
\renewcommand{\epsilon}{\varepsilon}

\newcommand{\problemDef}[3]{%
\begin{center}
	\setlength{\tabcolsep}{2pt}
	\begin{tabular}{@{}lp{12cm}@{}}
		\multicolumn{2}{@{}l}{\textsc{#1}} \\%
		\textbf{Input:} & #2 \\%
		\textbf{Question:} & #3 \\%
	\end{tabular}
\end{center}%
}

\usepackage{etoolbox}

\newcommand{\appsymb}{$\star$}
\newcommand{\appref}[1]{{\hyperref[proof:#1]{\appsymb}}}

\newcommand{\appendixproof}[2]{%
  \gappto{\appendixProofs}
  {
    \subsection{Proof of \cref{#1}}\label{proof:#1}
    #2
  }
}

\makeatletter
\let\@mkboth\@gobbletwo
\makeatother
\title{On the Parameterized Complexity of Bounded-Density Vertex Deletion}

\author{Jakob Raupach}{Technische Universität Berlin, Germany}{raupach@campus.tu-berlin.de}{}{}

\author{Tom-Lukas Breitkopf}{Algorithmics and Computational Complexity, Technische Universität Berlin, Germany}{t.breitkopf@tu-berlin.de}{https://orcid.org/0009-0008-2875-1945}{}

\author{Anton Herrmann}{Algorithmics and Computational Complexity, Technische Universität Berlin, Germany}{a.herrmann@tu-berlin.de}{https://orcid.org/0009-0008-8473-9043}{}

\author{André Nichterlein}{Algorithmics and Computational Complexity, Technische Universität Berlin, Germany}{andre.nichterlein@tu-berlin.de}{https://orcid.org/0000-0001-7451-9401}{}

\authorrunning{J.\ Raupach, T.-L.\ Breitkopf, A.\ Herrmann, A.\ Nichterlein}

\Copyright{Jakob Raupach, Tom-Lukas Breitkopf, Anton Herrmann, André Nichterlein}

\keywords{Graph Modification Problem, Integer Linear Programming, Dynamic Programming, Max Leaf Number, Vertex Integrity}

\ccsdesc[500]{Theory of computation~Parameterized complexity and exact algorithms}
\ccsdesc[300]{Theory of computation~Problems, reductions and completeness}

\hideLIPIcs %

\EventEditors{Michal Kouck\'{y} and Daniela Petri\c{s}an}
\EventNoEds{2}
\EventLongTitle{51st International Symposium on Mathematical Foundations of Computer Science (MFCS 2026)}
\EventShortTitle{MFCS 2026}
\EventAcronym{MFCS}
\EventYear{2026}
\EventDate{August 24--28, 2026}
\EventLocation{Paris, France}
\EventLogo{}
\SeriesVolume{386}
\ArticleNo{62}

\begin{document}
\maketitle

\begin{abstract}
We explore the parameterized complexity of \BDVD{} (\bdvd{}): given a graph~$G$, an integer budget~$k$, and a target density~$\taup$, the task is to determine whether the density (i.e.\ number of edges divided by number of vertices) of the densest subgraph of~$G$ can be reduced to at most~$\taup$ by deleting at most~$k$ vertices.
Our primary focus is on structural graph parameters related to treewidth, as the parameterized complexity of \bdvd with respect to treewidth was left as open question by Bazgan et al. [JCSS, 2025].
We resolve this question by showing $\W[1]$-hardness with respect to various parameters, including treedepth and feedback vertex number.
These results imply $\W[1]$-hardness with respect to treewidth.
We obtain positive results for parameters larger than treedepth and feedback vertex number, namely we show \bdvd is in~$\FPT$ parameterized by the max leaf number or vertex integrity.
Under the assumption that the target density~$\taup$ is a fixed constant the parameterized complexity landscape of \bdvd changes drastically, allowing a fixed-parameter tractable algorithm even for parameters smaller than treewidth, namely cliquewidth.
Altogether, our results provide a refined complexity landscape for \BDVD{}, sharply distinguishing between tractable and intractable parameter regimes under structural parameterizations.
\end{abstract}

\section{Introduction}\label{chap:intro}
The problem of identifying a densest subgraph constitutes a central topic in graph theory, with numerous applications across domains such as social network analysis, bioinformatics, and financial modeling~\cite{LANC24}. 
This problem has been extensively studied, with the first polynomial-time algorithms proposed more than four decades ago~\cite{PQ82}. 
In this work, we examine the stability of densest subgraphs under vertex deletions in the input graph.

We investigate \BDVD{} (\bdvd), which asks whether it is possible to delete at most~$k$ vertices so that the density of the densest subgraph falls to at most a specified threshold~$\taup$ (see \cref{fig:bdvd_example} for an example).
\begin{figure}[t]
	\centering
	\begin{tikzpicture}[yscale=0.9,dot/.style={draw,circle,inner sep=4pt},tdot/.style={draw,circle,inner sep=4pt,line width=0.3mm, red}]
		\node[dot] at (0.5, -1) (v1) {};
		\node[dot] at (2, 0) (v2) {};
		\node[dot] at (3, -1) (v3) {};
		\node[dot] at (3, 0) (v4) {};
		\node[dot, fill] at (4, 0) (v5) {};
		\node[dot] at (5, 0) (v6) {};
		\node[dot] at (5, -1) (v7) {};
		\node[dot] at (6, -1) (v8) {};
		\node[dot, fill] at (6, 0) (v9) {};
		\node[dot] at (7, 0) (v10) {};
		\node[dot] at (7, -1) (v11) {};
		\node[dot] at (8, -1) (v12) {};

		\draw
			(v1) -- (v2)
			(v2) -- (v4)
			(v5) -- (v4)
			(v3) -- (v5)
			(v3) -- (v2)
			(v3) -- (v4);
		\draw
			(v5) -- (v6)
			(v6) -- (v7)
			(v6) -- (v9)
			(v7) -- (v8)
			(v8) -- (v9)
			(v9) -- (v10)
			(v11) -- (v9)
			(v12) -- (v11)
			(v12) -- (v10);
		
		\node[inner sep = 4pt, very thick, draw,rounded corners, dashed, fit=(v2) (v3) (v5)] {};
		\node[inner sep = 12pt, very thick, draw,rounded corners, dashed, fit=(v2) (v3) (v9)] {};
		\node[inner sep = 8pt, very thick, draw,rounded corners, dashed, fit=(v6) (v7) (v12)] {};

	\end{tikzpicture}
	\caption{
	Example graph for \bdvd{} with~$k = 2$ and~$\taup = 1$.
	The remaining graph may only consist of connected components, each containing at most one cycle.
	Several subgraphs whose density exceeds $\taup$ are indicated,
	together with a possible solution (shown as filled vertices). Note that
	at least one vertex of each indicated subgraph must be deleted in a solution.
	\label{fig:bdvd_example}
	}
\end{figure}
Here, the density of a graph is defined as the ratio of the number of edges to the number of vertices. 
This problem was first introduced by Bazgan et al. \cite{BNV25}
and observed to generalize several classical vertex deletion problems:
namely, \vc{} for $\taup = 0$, \ds{} for $\taup = \nicefrac{1}{2}$, and \fvs{} for $\taup = 1-\nicefrac{1}{n}$ (where $n$ is the number of vertices in the graph). 
Moreover, for integral target densities, \bdvd{} actually coincides with $d$-\textsc{Orientable Deletion}~\cite{HKLOS20}; this follows from a characterization of graph density by Charikar~\cite{CHAR00}.

Densest subgraph problems have been studied extensively since the 1970s and have experienced renewed attention in recent years~\cite{LANC24}. 
Early algorithmic approaches were based on flow techniques~\cite{LANC24,PQ82,Gol84}.
The aforementioned characterization for graph density by Charikar~\cite{CHAR00} is based on the dual of a linear programming approach for computing the density of a densest subgraph.
This led to the notion of fractional orientations, which have since become a common tool in the study of dense subgraphs~\cite{NICH26,CHAN25}.

An \emph{orientation} of an undirected graph~$G$ assigns a direction to each edge.
A graph is called \emph{$d$-orientable}, for~$d \in \NN$, if there is an orientation resulting in a digraph with maximum indegree at most~$d$\footnote{In parts of the literature~\cite{AsahiroJMO15, BodlaenderOO18}, the bound~$d$ is for the maximum outdegree. To keep consistency with the characterization by Charikar~\cite{CHAR00}, we bound the indegree, which is completely symmetrical~\cite{HKLOS20}.}.
The task in $d$-\textsc{Orientable Deletion} is to remove a minimum number of vertices so that the resulting graph becomes $d$-orientable. 
A \emph{fractional orientation} assigns parts of an edge to each endpoint. 
More precisely, a fractional orientation is a function~$\phi$ assigning two non-negative rationals~$\phi(e)^u$ and~$\phi(e)^v$ with $\phi(e)^u + \phi(e)^v = 1$ to each edge $e \ceq \{u, v\} \in E(G)$.
The characterization of Charikar~\cite{CHAR00} states that a densest subgraph has density~$\taup$ if and only if the graph admits a fractional orientation with maximum indegree~$\taup$.
Moreover, as observed by Bentert et al.~\cite{NICH26} and Chandrasekaran et al.~\cite{CHAN25}, if~$\taup \in \NN$, then there also exists an integral orientation\footnote{To avoid ambiguity between \emph{orientation} and \emph{fractional orientation}, we call the former \emph{integral orientation}.} with maximum indegree~$\taup$ (see \cref{sec:cliquewidth} for more details).
Hence, $d$-\textsc{Orientable Deletion} is equivalent to \bdvd{} with~$\taup = d$.

Computing an integral or fractional orientation minimizing the indegree can be done in polynomial-time via a simple reduction to flow~\cite{NICH26}. 
Introduced by Asahiro et al.~\cite{AsahiroJMO15}, $d$-\textsc{Orientable Deletion} turns out to be a computationally challenging problem:
It is W[2]-hard with respect to the number of vertices to remove, W[1]-hard with respect to the cliquewidth of the input graph and not approximable within a better factor than~$\ln n$~\cite{HKLOS20}.
Algorithmic results include fixed-parameter tractability for treewidth and for the combined parameter cliquewidth plus~$d$~\cite{BodlaenderOO18, HKLOS20}.
Asahiro et al.~\cite{AJMO16} provided a $O(\log n)$-approximation for $d$-\textsc{Orientable Deletion}.
Chandrasekaran et al.~\cite{CHAN25} gave the same approximation ratio for problems generalizing \bdvd{}.
All hardness results directly carry over to \bdvd{}.

The classical computational complexity of \bdvd{} was studied by Bazgan et al.~\cite{BNV25}.
They showed that the problem is solvable in polynomial time on trees, but is $\classNP$-hard on planar bipartite graphs of maximum degree~4. 
Unaware of the connection to $d$-\textsc{Orientable Deletion}, they reprove the $\W[2]$-hardness of \bdvd{} with respect to the number of vertices to be deleted.
They also show fixed-parameter tractability when parameterized by the vertex cover number.
In the same work, Bazgan et al. considered the sister problem \bded{} (deleting edges instead of vertices; thereby generalizing the polynomial-time solvable \textsc{Maximum Cardinality Matching} for~$\taup = 1/2$ and \textsc{Feedback Edge Set} for~$\taup = 1-1/n$) and proved that it is $\W[1]$-hard when parameterized by the solution size combined with the feedback edge number and pathwidth.
Since the feedback edge number upper-bounds the treewidth, this result also implies $\W[1]$-hardness with respect to the combined parameter treewidth and solution size.
Whether an analogous hardness result holds for \bdvd{} remained open.

Heavily relying on fractional orientations, Bentert et al.~\cite{NICH26} showed that \bded{} is polynomial-time solvable whenever the target density is a multiple of~$1/2$ or strictly less than~$2/3$, while it is $\classNP$-hard for all other values.
They also proved that \bded{} is fixed-parameter tractable for the parameter treewidth when the target density is fixed via a reduction to \textsc{General Factors}.

\subparagraph*{Our results.}
All the listed special cases of \bdvd{} mentioned above are fixed-parameter tractable with respect to the treewidth; whether this is also true for \bdvd{} in general was posed as open question by Bazgan et al.~\cite{BNV25}.
As can already be seen at the special case $d$-\textsc{Orientable Deletion}, \bdvd{} has some aspects of a number problem: 
After deleting vertices, the edges have to be distributed in a balanced way to their endpoints.
If~$d$ becomes large, then this number problem aspect becomes quite challenging: 
The fixed-parameter tractability of $d$-\textsc{Orientable Deletion} with respect to cliquewidth and~$d$ is based on a dynamic program which stores partial solutions for each maximum indegree~$d' \le d$~\cite{HKLOS20}.
The W[1]-hardness with respect to cliquewidth alone~\cite{HKLOS20} shows that this approach of dealing with the inherent number problem can presumably not be improved.
In \cref{sec:cliquewidth}, we extend this dynamic program to \bdvd{} for target density~$\taup = \nicefrac{p}{q}$ with~$p,q \in \NN$, resulting in a running time of~$2^{O(p^7\log q + \cw \cdot p \log p)} n^{O(1)}$, where~$\cw$ is the cliquewidth of the input graph.

While structural observations (any graph of treewidth~$\tw$ is easily $\tw$-orientable) allows to deduce fixed-parameter tractability for $d$-\textsc{Orientable Deletion}, this turns out to be insufficient for \bdvd{}: We provide a dynamic program running in time $(\tw \cdot q)^{O(\tw^2)} \cdot n^{O(1)}$, where~$\tw$ is the treewidth of the input graph.
In \cref{sec:hardness}, we establish that the dependency on~$q$ cannot be avoided: we show that \bdvd is $\W[1]$-hard when parameterized by treewidth, and larger parameters like treedepth and feedback vertex set.
For unrestricted target densities, we also prove in \cref{sec:algos} fixed-parameter tractability with respect to the even larger parameters vertex integrity and maximum leaf number, respectively.
An overview of our results is shown in \cref{fig:results}.
\begin{figure}
 \centering
 \begin{tikzpicture}[
  yscale=1.5,
  xscale=1.5,
  every node/.style={draw,rectangle,rounded corners,font=\footnotesize,align=center,text width=8em,inner sep=4pt},
  fpt/.style={fill=SeaGreen},
  hard/.style={fill=Salmon},
  unknown/.style={fill=white}]
	\def\horizontalspace{2}
	\def\verticalspace{2}
	\node[unknown] (fes) at (2.8,1) {feedback edge number};
	\node[fpt] (vc) at (1,0) {vertex cover number~\cite{BNV25}};
	\node[hard] (fvs) at (2.5,2) {feedback vertex number (\Cref{thm:whard_td_fvn})};
	\node[fpt] (vi) at (0,1) {vertex integrity (\Cref{thm:vertex_integrity})};
	\node[fpt] (ml) at (5,0) {max leaf number (\Cref{thm:max_leaf_number})};
	\node[unknown] (cluster) at (5,1) {distance to cluster};
	\node[hard] (td) at (0,2) {treedepth (\Cref{thm:whard_td_co_interval,thm:whard_td_fvn})};
	\node[hard] (interval) at (7.3,2) {distance to interval (\Cref{thm:whard_td_co_interval})};
	\node[hard] (tw) at (2,3) {treewidth (\Cref{thm:whard_td_co_interval,thm:whard_td_fvn})};
	\node[hard] (cograph) at (5,3) {distance to cograph (\Cref{thm:whard_td_co_interval})};
	\node[hard] (cw) at (4,4) {cliquewidth~\cite{HKLOS20}};
	\node[draw=none,fill=none] (fix) at (1,4) {\textcolor{SeaGreen}{\textbf{\Cref{thm:fpt_cw_fixed_density,thm:fpt_tw_fixed_density}}}};

	\draw (ml) -- (fes) -- (fvs);
	\draw (vc) to[out=85,in=210] (fvs);
	\draw (vc) -- (vi);
	\draw (vc) to[out=20,in=210] (cluster);
	\draw (vi) -- (td);
	\draw (cluster) -- (interval);
	\draw (cluster) -- (cograph);
	\draw (td) -- (tw);
	\draw (tw) -- (cw);
	\draw (cograph) -- (cw);
	\draw (fvs) -- (tw);
	\draw (ml) to[out=20,in=250] (interval);

	\begin{scope}[on background layer]
	\node[draw=none,fill=SeaGreen!30, rounded corners, fit = (vc) (vi) (cw) (ml)]{};
	\end{scope}
 \end{tikzpicture}
  \caption{Hasse diagram of parameterized complexity of \bdvd with parameters growing from top to bottom.
  Green indicates fixed-parameter tractability while red indicates~$\W[1]$-hardness.
  Hardness even holds for the combined parameters treedepth + feedback vertex number and treedepth + distance to cograph + distance to interval graph.
  Indicated with a light green background are all the parameters for which \bdvd becomes fixed-parameter tractable if the target density~$\taup$ is constant.}
  \label{fig:results}
\end{figure}

\subparagraph*{Graph theory \& problem definition.}
For a graph $G$, we write $V(G)$ and $E(G)$ to denote the set of vertices and edges of $G$, respectively.
For an edge~$\{u,v\} \in E(G)$ we sometimes use~$uv$ as a shorthand.
We say $H$ is a \emph{subgraph} of~$G$, denoted~$H \subseteq G$, if~$H$ is a graph with $V(H) \subseteq V(G)$ and $E(H) \subseteq E(G)$.
Further for any set of vertices~$V' \subseteq V(G)$ we denote by $G[V']$ the \emph{induced} subgraph of $G$ on $V'$.
For any set of vertices~$V' \subseteq V(G)$ we denote by~$G - V' \ceq G[V(G) \setminus V']$ the graph obtained from $G$ by deleting the vertices~$V'$ and all incident edges. 
If $V'$ consists only of one vertex $v$, then we also write $G-v$ instead of $G-\{v\}$.
We denote the degree of a vertex $v \in V(G)$ by~${\deg_G(v) = |N_G(v)|}$.
If the graph is clear from the context, then we omit $G$.
A path~$P$ in~$G$ is a sequence~$(v_1, \dots, v_n)$ of pairwise different vertices such that~$v_{i}v_{i+1} \in E(G)$ for all~$i \in [n-1]:= \{1, \dots, n-1\}$.
The vertices~$v_1$ and~$v_n$ are called the endpoints of~$P$.
The complete graph (or \emph{clique}) on~$t \in \NN$ vertices is denoted by~$K_t$, and the complete bipartite graph (or \emph{biclique}) with parts of sizes~$s,t \in \NN$ is denoted by~$K_{s,t}$.
We define the \emph{density} $\rho(G)$ of a non-empty graph~$G$ with $\rho(G) \ceq |E(G)|/|V(G)|$ and the density of the empty graph as 0.
We also define the \emph{density of the densest subgraph} of a graph~$G$ as~$\rho^*(G) \coloneqq \max_{H \subseteq G} \rho(H)$.
A subgraph $H \subseteq G$ attaining this maximum is called a \emph{densest subgraph} of~$G$.

The problem definition of \bdvd{} as introduced by Bazgan et al.~\cite{BNV25} is as follows:
\problemDef{\BDVD}
{
	A graph $G$, an integer $k \in \NN$ and a rational number $\taup \ge 0$.
}
{
	Is there a subset $S \subseteq V(G)$ with $|S| \le k$ such that $\rho^* (G-S) \le \taup$?
}
\subparagraph*{Parameterized complexity.}
Parameterized complexity provides a multivariate approach for measuring the time complexity of a computational problem~\cite{PARAMALG,DF13}.
An instance~$(x,k)$ of a parameterized problem consists of a classical instance~$x$ together with a number~$k$, called the parameter.
A parameterized problem is \emph{fixed-parameter tractable} (FPT) if it can be decided in time~$f(k) \cdot |x|^{O(1)}$ for some computable function~$f$.
If for some~$t \ge 1$ a parameterized problem is~$\W[t]$-hard with respect to parameter~$k$, then it is considered unlikely to be fixed-parameter tractable when parameterized by~$k$.

\section{Algorithmic Results}\label{sec:algos}
We first consider the case in which~$p,q \in \NN$ are part of the parameter indicating the ``encoding size'' of~$\taup = \nicefrac{p}{q}$.
We show fixed-parameter tractability with respect to the combined parameter cliquewidth + $p + q$ by means of dynamic programming.
As indicated in \cref{fig:results} this immediately implies fixed-parameter tractability for most other parameters considered in this work when combined with~$p$ and~$q$.
As the larger parameter treewidth bounds the cliquewidth exponentially, however, the running time of the dynamic program becomes double-exponential in the parameter.
We therefore provide another dynamic program for the parameter treewidth combined with~$p$ and~$q$ with a single-exponential running time.

As will be established in \cref{sec:hardness}, \bdvd becomes much harder if $p+q$ is not part of the parameter.
For this setting we therefore turn to larger parameters.
Specifically, we show fixed-parameter tractability for the parameters vertex integrity, thereby strengthening a previous result for vertex cover number~\cite{BNV25}, and max leaf number.
As can be seen in \cref{fig:results}, this draws a somewhat close line to the border of tractability established later.
Both algorithmic approaches use some initial branching to simplify the instance and then use Integer Linear Programming to tackle the number problems that remain.

\subsection{Cliquewidth and Target Density}
\label{sec:cliquewidth}
Here, we show that \BDVD is fixed-parameter tractable parameterized by cliquewidth, $p$, and~$q$ where~$p,q \in \NN$ with~$\taup = \nicefrac{p}{q}$.
Recall that the cliquewidth~$\cw$ of a graph~$G$ is the smallest number of labels such that~$G$ can inductively be obtained as a labeled graph from the following four operations \cite{CourcelleMR00}.
\begin{enumerate}[i)]
 \item Introduce~$\circ_i$: The graph consisting of a single vertex with label~$i$.
 \item Union~$G_1 \oplus G_2$: Creates the disjoint union of the labeled graphs~$G_1$ and~$G_2$.
 \item Relabel~$\rho(i_1,i_2)$: Relabels all vertices of label~$i_1$ with the label~$i_2$.
 \item Join~$\eta(i_1,i_2)$: All edges between the vertices of the labels~$i_1$ and~$i_2$ are added.
\end{enumerate}
An \emph{$\ell$-expression}~$T$ is a rooted binary tree in which every node is associated with one of the four operations, $\ell$ labels are used and the following properties hold.
\begin{itemize}
 \item Each leaf is associated with an introduce operation.
 \item Each node with two children is associated with an union operation.
 \item Each node with exactly one child is associated with a relabel or join operation.
\end{itemize}
If~$G$ is obtained as a labeled graph at the root, then we call~$T$ an \emph{$\ell$-expression} of~$G$.
The subgraph of~$G$ constructed up to node~$t$ of~$T$ is denoted by~$G_t$.
Given a graph~$G$ of cliquewidth~$\cw$, one can compute a~$(2^{\cw + 1} - 1)$-expression of~$G$ in FPT time with respect to~$\cw$ \cite{Korhonen024}.

In the following it is helpful to think about fractional orientations instead of subgraph densities.
The question then is whether we can delete~$k$ vertices from~$G$ such that there exists a fractional orientation with maximum indegree at most~$\nicefrac{p}{q}$?
Formally a fractional orientation is a function~$\phi$ assigning two non-negative rationals~$\phi(e)^u$
and~$\phi(e)^v$ with $\phi(e)^u + \phi(e)^v = 1$ to each edge
$e \ceq \{u, v\} \in E(G)$ of graph~$G$.
We denote with
\[
	\degphi(v) \ceq \sum_{u \in N(v)} \phi(\{u, v\})^v \qquad \text{and} \qquad \Degphi \ceq \max_{v \in V(G)} \degphi(v)
\]
the \emph{indegree} of a vertex $v \in V(G)$ and the \emph{maximum indegree} of a graph~$G$ with respect to~$\phi$.
The connection between fractional orientations and the density of the densest subgraph is formalized by the following lemma.
\begin{lemma}[Charikar \cite{CHAR00}]
	For every graph $G$, there exists a fractional orientation~$\phi$ such that~$\rho^*(G) = \Degphi(G)$.
	\label{lem:charikar}
\end{lemma}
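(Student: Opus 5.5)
We prove the two inequalities $\Degphi(G) \ge \rho^*(G)$ for every fractional orientation $\phi$, and $\Degphi(G) \le \rho^*(G)$ for a suitably chosen $\phi$.

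\textbf{Lower bound.} Fix any fractional orientation $\phi$ and any subgraph $H \subseteq G$ with $V(H) \neq \emptyset$. Every edge $e=\{u,v\}\in E(H)$ has both endpoints in $V(H)$, so it contributes $\phi(e)^u+\phi(e)^v=1$ to the sum $\sum_{v\in V(H)}\degphi(v)$. All remaining contributions are non-negative, hence
\[
|E(H)| \le \sum_{v\in V(H)}\degphi(v) \le |V(H)|\cdot\Degphi(G).
\]
This gives $\rho(H)\le \Degphi(G)$. The empty subgraph has density $0$, so it is covered trivially. Taking the maximum over $H$ yields $\rho^*(G)\le \Degphi(G)$.

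\textbf{Upper bound.} Set $\tau\ceq\rho^*(G)$. We must construct $\phi$ with $\degphi(v)\le\tau$ for all $v$, and we do this by a flow argument. Build a network with a source $s$, one node per edge $e\in E(G)$, one node per vertex $v\in V(G)$, and a sink $t$. Add arcs $s\to e$ of capacity $1$, arcs $e\to u$ and $e\to v$ of infinite capacity for each $e=\{u,v\}$, and arcs $v\to t$ of capacity $\tau$. A flow of value $|E(G)|$ saturates every source arc. Setting $\phi(e)^u$ to the flow on $e\to u$ then gives a fractional orientation with indegree at most $\tau$ everywhere, and rational capacities produce rational flow values.

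\textbf{Main step.} By max-flow/min-cut it suffices to show that every $s$-$t$ cut has capacity at least $|E(G)|$. Let the source side of a finite cut contain the edge-nodes $F$ and the vertex-nodes $U$. Finiteness forces every endpoint of every edge in $F$ to lie in $U$, so $F\subseteq E(G[U])$. The cut capacity equals $(|E(G)|-|F|)+\tau|U|$. Since $\tau\ge\rho(G[U])$, we have $\tau|U|\ge|E(G[U])|\ge|F|$, and therefore the capacity is at least $|E(G)|$. This is where the definition of $\rho^*$ is used and is the key step; everything else is bookkeeping. Combining both bounds, the $\phi$ read off from a maximum flow satisfies $\Degphi(G)=\rho^*(G)$. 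An alternative route is LP duality on Charikar's densest-subgraph LP, but the flow argument avoids discussing integrality of that LP.
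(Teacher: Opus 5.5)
Your proof is correct, but it takes a different route from the one behind the lemma. The paper gives no proof and only cites Charikar, whose argument is LP duality (as the introduction notes). He shows that the densest-subgraph LP relaxation (maximize $\sum_e x_e$ subject to $x_e \le y_u$, $x_e \le y_v$, $\sum_v y_v \le 1$) has optimum exactly $\rho^*(G)$, using a threshold-rounding argument on the $y$-values. The fractional orientation is then an optimal solution of the dual LP.

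You instead prove both inequalities directly. Weak duality comes from your edge-counting argument. Existence of an orientation with indegree at most $\rho^*(G)$ comes from max-flow/min-cut, whose only combinatorial content is that a finite cut forces $F \subseteq E(G[U])$. This avoids the LP-exactness (rounding) step entirely and is self-contained.

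Your route also pays off inside this paper. The cut argument only uses $\tau \ge \rho^*(G)$. Running it with $\tau = \nicefrac{p}{q}$, capping the infinite arcs at $1$ (which changes no feasible flow) and scaling by $q$ directly gives the integral flow needed in \cref{lem:q_quantized}. That lemma would then no longer need to invoke Charikar. What Charikar's route gives in return is exactness of the LP relaxation itself, which is useful beyond the bare existence statement (e.g.\ for analysing greedy peeling).

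One small imprecision: rational capacities do not force every maximum flow to be rational. You should say that \emph{some} maximum flow is rational, for instance by the integrality theorem after capping and scaling by the denominator of $\tau$. This matters because the paper's definition of a fractional orientation requires rational values.
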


In this context, a fixed target density allows us to restrict ourselves to the existence of~\emph{$q$-quantized fractional orientations} which we define next.
\begin{definition}
    Let $G$ be a graph and $q \in \NN$. A fractional orientation $\phi$ is
    called \emph{$q$-quantized} if for every edge $e = \{u, v\} \in E(G)$
    the following holds: $\phi(e)^u = \nicefrac{p_u}{q}$ and~$\phi(e)^v = \nicefrac{p_v}{q}$ for $p_u, p_v \in \NN$.
\end{definition}

\begin{observation}
	Let $G$ be a graph and $\phi$ a $q$-quantized fractional orientation.
	Then, for all $v \in V(G)$ we have~$\degphi(v) = \nicefrac{p_v}{q}$ for some $p_v \in \NN$.
\end{observation}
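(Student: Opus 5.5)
The observation follows directly by unfolding the definitions of indegree and of a $q$-quantized fractional orientation, and then using closure of $\NN$ under finite sums.

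Fix an arbitrary vertex $v \in V(G)$. By definition,
\[
	\degphi(v) = \sum_{u \in N(v)} \phi(\{u,v\})^v .
\]
Since $\phi$ is $q$-quantized, each edge $e = \{u,v\}$ incident to $v$ satisfies $\phi(e)^v = \nicefrac{p^{(e)}_v}{q}$ for some $p^{(e)}_v \in \NN$. Substituting and pulling out the common denominator gives
\[
	\degphi(v) = \frac{1}{q}\sum_{u \in N(v)} p^{(\{u,v\})}_v .
\]
Set $p_v \ceq \sum_{u \in N(v)} p^{(\{u,v\})}_v$. This is a finite sum of natural numbers, because $N(v)$ is finite, so $p_v \in \NN$. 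The empty case, where $v$ is isolated, gives $p_v = 0$ and is harmless as long as $0 \in \NN$, which the paper's usage of $\NN$ in the quantization definition already presupposes.

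There is no real obstacle here. The only care needed is to keep the numerator $p^{(e)}_v$ indexed by the edge, since the definition names it $p_v$ per edge and that notation would otherwise clash with the $p_v$ in the statement.
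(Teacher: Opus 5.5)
Your proof is correct and is exactly the immediate argument the paper has in mind. The paper states this as an observation without proof; it follows by writing each $\phi(\{u,v\})^v$ as a natural-number multiple of $\nicefrac{1}{q}$ and summing over the finitely many neighbors of $v$, with your re-indexing of the numerators by edge being a sensible bit of care.
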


A consequence of the next lemma is that it is sufficient to consider~$q$-quantized fractional orientations for the remainder of the section.
In essence the proof takes the flow construction of Bentert et al.~\cite{NICH26} which is depicted in \cref{fig:q_quantized_flow} and uses it to compute a fractional orientation.
By scaling the network and using known integral properties for maximum flows the claim follows.

\begin{lemma}
\label{lem:q_quantized}
	Let $G$ be a graph. For every $\taup = \nicefrac{p}{q} \ge \rho^*(G)$,
	there exists a $q$-quantized fractional orientation $\phi$ with
	$\Degphi(G) \le \taup$.
\end{lemma}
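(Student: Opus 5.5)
We would prove \cref{lem:q_quantized} by a flow argument combined with integrality of maximum flows. Start from \cref{lem:charikar}, which gives a fractional orientation $\phi^*$ of $G$ with $\Delta^-_{\phi^*}(G) = \rho^*(G) \le \nicefrac{p}{q}$. In particular, a (possibly non-quantized) fractional orientation with maximum indegree at most $\taup$ exists. We then encode fractional orientations with indegree bound $\taup$ as flows in a network and rescale everything by $q$ so that all capacities become integers.

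Concretely, we build the network $N$ as follows. It has a source $s$, a sink $t$, one node $x_e$ for every edge $e \in E(G)$, and one node $y_v$ for every vertex $v \in V(G)$. We add an arc $s \to x_e$ of capacity $q$ for each edge $e$. For $e=\{u,v\}$ we add arcs $x_e \to y_u$ and $x_e \to y_v$ of capacity $q$. Finally, for each vertex $v$ we add an arc $y_v \to t$ of capacity $p$. A fractional orientation $\phi$ with $\Degphi(G) \le \nicefrac{p}{q}$ yields a flow $f$ by setting
\[
f(s,x_e) = q, \qquad f(x_e,y_v) = q\,\phi(e)^v, \qquad f(y_v,t) = q\,\degphi(v).
\]
This flow satisfies conservation, and $f(y_v,t) \le p$ holds precisely because $\degphi(v) \le \nicefrac{p}{q}$. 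Its value is $q\,|E(G)|$, which equals the total capacity out of $s$, so the flow is maximum. Applying this to $\phi^*$ shows that the maximum flow value of $N$ is exactly $q|E(G)|$.

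Since all capacities are integral, the integral flow theorem (for example, Ford--Fulkerson) gives an integral maximum flow $g$ of value $q|E(G)|$. Every source arc is then saturated, so $g(x_e,y_u)+g(x_e,y_v)=q$ for each edge $e=\{u,v\}$. We define $\phi(e)^u \ceq g(x_e,y_u)/q$ and $\phi(e)^v \ceq g(x_e,y_v)/q$. These are non-negative multiples of $\nicefrac{1}{q}$ that sum to $1$, so $\phi$ is a $q$-quantized fractional orientation. By conservation at $y_v$, we get $\degphi(v) = g(y_v,t)/q \le \nicefrac{p}{q}$.

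The only delicate point is the saturation argument. We must know that the maximum flow equals $q|E(G)|$ before integrality can be used to conclude that every edge is fully distributed, and this comes from the flow induced by the Charikar orientation. The rest is routine bookkeeping. One should also note that $\taup=\nicefrac{p}{q}$ may be taken with any representation $p,q\in\NN$, since the construction never uses that the fraction is reduced.
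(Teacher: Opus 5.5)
Your proof is correct and follows essentially the same route as the paper: the same source--edge--vertex--sink flow network, with the $q$-scaled capacities written down directly rather than obtained by scaling afterwards; Charikar's orientation to show that the maximum flow value is $q|E(G)|$; and integrality of maximum flows to extract a $q$-quantized orientation with indegree at most $\nicefrac{p}{q}$. Your explicit argument that saturation of the source arcs plus conservation at $x_e$ forces $\phi(e)^u+\phi(e)^v=1$ is a detail the paper leaves implicit, and it is right to spell it out.
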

\begin{proof}
	We prove the statement by scaling the flow network~$N$ by Bentert et al.~\cite{NICH26} and then using known integral properties for maximum flows.
	The vertex set of the flow network~$N$ contains a source~$s$, a sink~$t$, all vertices from~$G$ and for every edge~$e \in E(G)$ one vertex~$v_e$.
	Formally this means~$V(N) := \{s,t\} \cup V(G) \cup \{v_e \mid e \in E(G)\}$.
	For every~$e \in E(G)$ add the arc~$(s,v_e)$ with capacity one and for every~$u \in V(G)$ add the arc~$(u,t)$ with capacity~$\taup = \nicefrac{p}{q}$ to~$N$.
	Moreover, for every edge~$e = uw \in E(G)$ introduce two arcs~$(v_e,u)$ and~$(v_e,w)$ with capacity one.
	A sketch of the constructed network can be seen in~\cref{fig:q_quantized_flow}.

	By~\cref{lem:charikar} we know that there exists a fractional orientation~$\phi_1$ in the graph~$G$ with~$\Delta_{\phi_1}^- = \rho^*(G) \leq \taup$.
	Consequently, the following flow~$f_1$ is feasible in~$N$:
	\begin{equation*}
	f_1(a) \coloneqq
		\begin{cases}
			1 & \text{if } a = (s,v_e) \text{ for some } e \in E(G) \\
			\phi_1(e)^u & \text{if } a = (v_e,u) \\
			\deg_{\phi_1}^-(u) & \text{if } a = (u,t) \text{ for some } u\in V(G).
		\end{cases}
	\end{equation*}
	Note that~$f_1$ is a maximum flow for~$N$ since all outgoing arcs from the source have a maximal throughput of~$1$.
	Now consider the network~$qN$, which is obtained from~$N$ by multiplying all capacities with~$q$.
	It is clear that~$qf_1$, that is the flow~$f_1$ scaled by the factor~$q$, is a maximum flow in~$qN$.
	Since all capacities in~$qN$ are integral, we know that there exists an integral maximum flow.
	Let~$qf_2$ be this integral maximum flow.
	Then~$f_2$ is a maximum flow in~$N$ in which the flow on every arc is a multiple of~$\nicefrac{1}{q}$ since~$qf_2$ is integral.
	This allows us to define the $q$-quantized fractional orientation~$\phi_2$ by setting~$\phi_2(e)^u := f_2((v_e,u))$.
	Recall that the arc~$(v,t)$ has capacity~$\taup$ in~$N$ for every~$v \in V(G)$ and therefore by construction~$\Delta_{\phi_2}^- \leq \taup$.
\end{proof}

	\begin{figure}
	\centering
	\begin{tikzpicture}[myarrow/.style={-Stealth},node distance=1cm and 1cm]
			\begin{scope}
				\begin{scope}[node distance=1cm and 2cm]
					\node[mycircle] (v1) {$v_1$};
					\node[mycircle, right=of v1] (v2) {$v_2$};
					\node[mycircle, below=of v1] (v3) {$v_3$};
					\node[mycircle, below=of v2] (v4) {$v_4$};
				\end{scope}

				\foreach \i/\j/\txt/\p in {%
					v1/v2/$e_1$/above,
					v2/v3/$e_2$/above,
					v3/v4/$e_3$/above,
					v1/v3/$e_4$/left,
					v2/v4/$e_5$/right}
					\draw [-] (\i) -- node[font=\small,\p] {\txt} (\j);
			\end{scope}

			\begin{scope}[xshift=8cm,yshift=1.4cm]
				\begin{scope}[yscale=1.4,xscale=1.6]
					\node[mycircle] (s) at (0,0) {$s$};
					\foreach \i in {1,...,5}
					{
						\node[mycircle] (e\i) at (-3 + \i, -1) {$v_{e_\i}$};
					}
					\foreach \i in {1,...,4}
					{
						\node[mycircle] (v\i) at (-2.5 + \i, -2) {$v_\i$};
					}
					\node[mycircle] (t) at (0,-3) {$t$};
				\end{scope}

				\foreach \i/\j in {s/e2, s/e3, s/e4, s/e5, e1/v2, e2/v2, e2/v3, e3/v3, e3/v4, e4/v1, e4/v3, e5/v2, e5/v4, v2/t, v3/t, v4/t}
					\draw [myarrow] (\i) -- (\j);

				\foreach \i/\j/\txt/\p in {s/e1/$1$/left, e1/v1/$1$/left, v1/t/$\taup = \frac{p}{q}$/left}
					\draw [myarrow] (\i) -- node[font=\small,\p, xshift=-10pt] {\txt} (\j);

			\end{scope}
		\end{tikzpicture}
	\caption{A sketch of the flow network which is constructed and scaled in~\cref{lem:q_quantized}.
	We remark that the figure is taken from Bentert et al.~\cite{NICH26}.}
	\label{fig:q_quantized_flow}
\end{figure}

The idea of our algorithm is to extend the dynamic programming algorithm from Hanaka
et al.~\cite{HKLOS20} which is an extension of an algorithm from Bodlaender et al.~\cite{BodlaenderOO18}.
The main idea is bottom-up dynamic programming over a given cliquewidth expression~$T$ to keep track of all possible indegree signatures.
For a node~$t$ of~$T$, a pair~$(k,A)$, where~$k$ is a natural number and~$A = (A_{i,j})_{j\in [1,\cw],j \in [0,p]}$ is a~$\cw \times (p+1)$ matrix, is an \emph{indegree signature} if there exists~$K_t \subseteq V(G_t)$ with~$|K_t| \leq k$ and a fractional orientation~$\phi_t$ of~$G_t - K_t$ such that the number of label~$i$ vertices with indegree~$\nicefrac{j}{q}$ in~$G_t - K_t$ is exactly~$A_{i,j}$.
For a label~$i$, we refer to~$j \in [0,p]$ as an~\emph{indegree class} of~$i$.
The improvement from~\cite{HKLOS20} over~\cite{BodlaenderOO18} is the combinatorial observation that the exact size of an indegree class does not matter if it is large enough ($> 3p^3$ for us) as in this case all newly created edges can be oriented towards the vertices from the large indegree class.
To further extend the algorithm from Hanaka et al.~\cite{HKLOS20}, we use the fact that we can restrict ourselves to~$q$-quantized fractional orientations when the maximum permitted indegree is~$\nicefrac{p}{q}$.

\begin{theorem}
	\label{thm:fpt_cw_fixed_density}
	Given a~$\cw$-expression of the graph, \BDVD can be solved in~$q^{O(p^7)} p^{O(\cw \cdot p)} \cdot n^{O(1)}$.
\end{theorem}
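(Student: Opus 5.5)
We run a bottom-up dynamic program over the given $\cw$-expression $T$. By \cref{lem:charikar} and \cref{lem:q_quantized}, a set $S$ is a solution iff $G-S$ admits a $q$-quantized fractional orientation with maximum indegree at most $p/q$. Throughout we measure indegrees in units of $1/q$, so each surviving vertex has an integral indegree class $j\in[0,p]$.

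For a node $t$ we store, for every $\cw\times(p+1)$ matrix $A$, the minimum $k$ such that $(k,A)$ is an indegree signature of $G_t$. Following Hanaka et al.~\cite{HKLOS20}, we cap every entry $A_{i,j}$ at a threshold $c=3p^3$ (or some polynomial in $p$ to be fixed later); a capped value means ``at least $c$''. This keeps the table at $(c+2)^{\cw(p+1)}=p^{O(\cw\cdot p)}$ entries.

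The four operations are handled as follows.
\begin{itemize}
\item Introduce $\circ_i$: either delete the vertex (cost $1$, zero matrix) or keep it with indegree class $0$.
\item Relabel $\rho(i_1,i_2)$: add row $i_1$ to row $i_2$, zero row $i_1$, and recap.
\item Union: combine all pairs of table entries by adding the matrices (with capping) and adding the costs, in time quadratic in the table size.
\item Join $\eta(i_1,i_2)$: every surviving vertex of label $i_1$ becomes adjacent to every surviving vertex of label $i_2$. Each new edge must be split into quanta $a/q$ and $(q-a)/q$. We must decide how the new in-weight distributes so that every vertex ends in some class $j'\le p$.
\end{itemize}

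The join is the main obstacle. A vertex $v$ of label $i_1$ receives a total of $s_v/q$ from its new edges, and $s_v\le p$. Since each new edge contributes $q$ units split between its endpoints, there are $q\cdot n_1 n_2$ units to place, while the total capacity is at most $p(n_1+n_2)$. So if both sides are large the join is infeasible. If neither label has a capped class, both sides have at most $c(p+1)$ vertices. We can then enumerate the new in-quantities per vertex grouped by (old class, new class), and check feasibility of the split by a small transportation/flow argument. The degree of freedom per edge is at most $q$ and there are $\mathrm{poly}(p)$ edges, so the enumeration costs $q^{O(p^7)}$ if done naively over edges. This is what we expect to give the $q^{O(p^7)}$ factor: roughly $(p^3)^2$ edges with $q+1$ choices each, times the $p$-dependent bookkeeping.

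If one side has a capped (large) class, we adapt the combinatorial exchange argument of~\cite{HKLOS20}. The other side must then be tiny: at most $p$ vertices, else infeasible by the counting bound, since every vertex on the large side would receive at least $q$ units in total, and among at least $c$ vertices the capacity argument forces the small side to have fewer than $p$ vertices. The new edges can then be oriented almost entirely towards the vertices of the large class. Only $O(p^2)$ units of indegree shift in total, so the capped count remains capped after moving a bounded number of vertices between classes. Hence capped entries stay valid.

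The delicate point is to prove the invariant that replacing an exact count $\ge c$ by ``at least $c$'' never changes feasibility. We would show that any realizable assignment for the true count can be converted into one where at most $O(p^3)$ vertices of the large class are touched. This mirrors the $3p^3$ threshold argument of~\cite{HKLOS20} with units $1/q$ in place of $1$.

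Correctness follows by induction over $T$. The answer is yes iff some entry at the root has cost at most $k$. Multiplying the table size by the join enumeration cost over the $O(n)$ nodes of $T$ gives $q^{O(p^7)}p^{O(\cw\cdot p)}n^{O(1)}$.
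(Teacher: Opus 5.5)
You are running the same dynamic program as the paper: indegree classes in units of $1/q$ justified by the quantization lemma, counts capped at $c=3p^3$, the biclique counting that rules out two large sides, and brute force over $q$-quantized splits when the involved classes are small. The gap is the one step that makes the algorithm FPT rather than XP: the join in which label $i_1$ has a capped class $j_1$.

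Your justification is that only $O(p^2)$ units shift, ``so the capped count remains capped after moving a bounded number of vertices between classes''. This does not hold. A capped entry only certifies that the class has at least $c$ vertices. Once $b>0$ of them are split off into other classes, the remainder is only known to have at least $c-b$. Marking it capped again is unsound, and over repeated joins the guarantee erodes. Yet every later use of the cap needs the class to really have at least $c$ vertices: ruling out a second large side, and the exchange argument itself.

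The invariant you propose to prove instead is also the wrong target. You want an assignment that touches at most $O(p^3)$ vertices of the large class. But every feasible assignment already touches at most $p^2$ of them, and that does not make the capped entry well defined.

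The missing idea is the exchange lemma (the paper's claim). One may assume that \emph{no} vertex of the capped class is touched, for the edges created at $t$ and at all its ancestors. Then the whole class moves as a block to class $j_1+qs$, where $s$ is the number of surviving label-$i_2$ vertices; discard the signature if this exceeds $p$. The class stays capped. The proof runs as follows.
\begin{itemize}
\item All vertices of the class carry the same label at $t'$. So, in an irredundant expression (which may be assumed), they have the same set $W$ of future neighbours and form a biclique with $W$.
\item Your counting then gives $|W|\le p$.
\item Under $q$-quantization each $w\in W$ receives a nonzero share of at most $p$ edges. Hence at most $p^2<3p^3$ class vertices ever receive less than a full share.
\item So some class vertex $v$ absorbs all its future edges fully within the budget $p/q$.
\item Every other class vertex $u$ has the same current indegree $j_1/q$ and the same future neighbourhood $W$. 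Orienting all future edges at $u$ fully towards $u$ gives $u$ the final indegree of $v$ and only lowers indegrees in $W$.
\end{itemize}

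With this lemma the capped case becomes a deterministic block shift, and the rest of your argument goes through. For the running time, note that in the uncapped case it is your bound $\min(n_1,n_2)\le 2p$ that limits the new edges to $O(p^5)$, not the rough $(p^3)^2$ estimate. That keeps the enumeration within $q^{O(p^7)}$.
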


\begin{proof}
	Let~$G$ be the input graph and~$T$ a given~$\cw$-expression of~$G$.
	In the following we state how to compute all indegree signatures~$(k,A)$ for the nodes of~$T$ in a bottom-up manner.
	The smallest number~$k$ for which there is an indegree signature~$(k,A)$ of the root~$r$ equals the minimum number of vertices which have to be deleted such that there exists a fractional orientation with maximum indegree at most~$\nicefrac{p}{q}$.
	Without explicitly mentioning it again, we always assume that we set an entry of the matrix~$A$ of an indegree signature~$(k,A)$ to 'large' if the respective value would be larger than~$3p^3$.
	We need to distinguish between four cases for a node~$t$ of~$T$, where the first three cases are essentially the same as in~\cite{BodlaenderOO18, HKLOS20}.

	\emph{Introduce Node~$\circ_i$:} We have two indegree signatures.
	If~$k=0$, then~$A_{i,0} = 1$ and all other entries of~$A$ are zero.
	If~$k=1$, then~$A$ is the zero matrix.

	\emph{Union Node~$G_{t'} \oplus G_{t''}$:} Let~$t'$ and~$t''$ be the children of~$t$.
	Two indegree signatures~$(k',A')$ and~$(k'',A'')$ of the nodes~$t'$ and~$t''$ respectively are merged to~$(k:=k'+k'',A)$ where we set~$A_{i,j} \coloneqq A'_{i,j} + A''_{i,j}$ for all~$j$.

	\emph{Relabel Node~$\rho(i_1,i_2)$:} Let~$t'$ be the child of~$t$.
	For every indegree signature~$(k',A')$ of~$t'$ we obtain an indegree signature~$(k:= k',A)$ of~$t$ by setting~$A_{i_1,j} \coloneqq 0, A_{i_2,j} \coloneqq A'_{i_1,j} + A'_{i_2,j}$ and~$A_{i,j} \coloneqq A'_{i,j}$ fo all~$i\notin \{i_1,i_2\}$ and all~$j$.

	\emph{Join Node~$\eta(i_1,i_2)$:} Let~$t'$ be the child of~$t$.
	Join nodes are the only nodes where edges are created.
	As we have multiple options how to orient the new edges, we potentially compute a lot of indegree signatures~$(k,A)$ of~$t$ from one indegree signature~$(k',A')$ of~$t'$.
	If both rows~$A'_{i_1}$ and~$A'_{i_2}$ contain an entry with the label 'large', then the indegree signature~$(k',A')$ of~$t'$ can be discarded since~$\rho(K_{2p+1,2p}) > \nicefrac{p}{q}$.
	Now suppose the row~$A'_{i_2}$ contains no label 'large' and let~$A'_{i_1,j_1}$ be an entry of the row~$A'_{i_1}$.
	There are two cases.
	\begin{itemize}
	 \item $A'_{i_1,j_1}$ contains the label 'large'.
	 In this case we can orient all newly created edges towards the vertices from~$A'_{i_1,j_1}$ and move them to the larger indegree class~$j_1 + sq$ where~$s := \sum_{j=0}^{p} A'_{i_2,j}$.
	 The following claim (adjusted from~\cite{HKLOS20} to our setting) shows that this does not change the smallest~$k$ for which there is an indegree signature of the root~$r$.
	 \begin{claim}
	  If there is an indegree signature~$(k,A_1)$ of the root~$r$ when not all of the created edges are fully oriented towards~$A'_{i_1,j_1}$, then there is also an indegree signature~$(k,A_2)$ of the root~$r$ when all of them are fully oriented towards~$A'_{i_1,j_1}$.
	 \end{claim}
	 \begin{claimproof}
		We show that there exists at least one vertex from the indegree class~$j_1$ that receives the full orientation of every edge introduced in ascendants of the node~$t'$ in~$T$ without exceeding the indegree~$\nicefrac{p}{q}$.
		This implies that all vertices from the same indegree class~$j_1$ of label~$i_1$ can receive the full orientation of these edges without exceeding the indegree~$\nicefrac{p}{q}$.
		Since~$A'_{i_1,j_1} > 3p^3$ and~$\rho(K_{2p+1,2p}) > \nicefrac{p}{q}$ it is clear that~$\sum_{j=0}^{p} A'_{i_2,j} < 2p$.
		As we are only considering~$q$-quantized fractional orientations, we know that every vertex with label~$i_2$ can receive orientation from at most~$p$ edges.
		Thus, more than~$3p^3 - 2p^2$ of the vertices from the indegree-class~$j_1$ receive the full orientation from all new edges to which they are incident and thus are moved to a higher indegree-class~$j' > j_1$.
		There are only~$p+1$ indegree-classes, so a vertex can move at most~$p$ times to a higher indegree-class.
		Consequently, more than $3p^3 - 2p^3$ vertices of~$j_1$ are assigned all newly created edges in the ascendants of~$t'$.
	 \end{claimproof}
	 \item $A'_{i_1,j_1}$ does not contain the label 'large'.
	 In this case, the algorithm needs to consider all possible~$q$-quantized fractional orientations of the new edges that are incident to the vertices from~$A'_{i_1,j_1}$.
	 Since~$A'_{i_1,j_1} \leq 3p^3$ and each of the~$p+1$ indegree classes of the label~$i_2$ also has size at most~$3p^3$, there are~$O(p^7)$ new edges between the indegree class~$j_1$ of the label~$i_1$ and vertices of the label~$i_2$.
	 Each edge has~$q+1$ possible orientations, and therefore the algorithm has to consider~$(q+1)^{O(p^7)}$ possible~$q$-quantized fractional orientations for~$A'_{i_1,j_1}$.
	 If one vertex would get indegree larger than~$\nicefrac{p}{q}$, then the respective fractional orientation is ignored.
	 Otherwise, the vertices are moved to the corresponding indegree classes in~$A$ according to the fractional orientation.
	\end{itemize}
	For each of the~$p+1$ indegree classes of the label~$i_1$, we create at most~$(q+1)^{O(p^7)}$ indegree signatures~$(k:=k',A)$ of~$t$ for a given indegree signature~$(k',A')$ of~$t'$.

	For an indegree signature~$(k,A)$ every entry of~$A$ contains either the label 'large' or a number from the interval~$[0,3p^3]$.
	Hence, for every node~$t$ of the cliquewidth expression there exist at most~$n \cdot (3p^3 + 2)^{\cw \cdot (p+1)}$ indegree signatures.
	Combined with the running time of the most expensive operation (join), this yields a running time of~$q^{O(p^7)} p^{O(\cw \cdot p)} \cdot n^{O(1)}$.
\end{proof}

\subsection{Treewidth and Target Density}
\cref{thm:fpt_cw_fixed_density} implies that \bdvd is in FPT parameterized by treewidth,~$p$ and~$q$ where~$p,q\in\NN$ with~$\taup=\nicefrac{p}{q}$ as the treewidth of a graph bounds in the cliquewidth~\cite{CO00,CR05}.
As the cliquewidth can be exponential in the treewidth, however, this gives a double-exponential running time.
Using dynamic programming over nice tree decompositions we obtain a single-exponential running time for the parameter treewidth and~$q$.
The algorithm follows the standard bottom-up dynamic programming approach on nice tree decompositions.
To keep the number of states manageable, we again use quantized fractional orientations, which restrict the number of possible orientations that need to be considered for each bag of the decomposition.
For every bag, the algorithm considers all subsets of the bag and evaluates all feasible \emph{weighted} fractional orientations for them, where the weights are used to track the indegree contributions of
already forgotten vertices adjacent to vertices contained in the subset, while simultaneously storing the minimum number of vertices that must be deleted so that the remaining graph admits a fractional orientation consistent with the considered weighted fractional orientation and with maximum indegree at most~$\taup$.
We can further show that~$p \le \tw \cdot q$ (otherwise no vertex needs to be deleted), which leads to the next theorem.
Proofs of statements marked with~\appsymb{} are deferred to the appendix.

\begin{restatable}[\appref{thm:fpt_tw_fixed_density}]{theorem}{fpttwfixeddensity}\label{thm:fpt_tw_fixed_density}
	Given an $n$-vertex graph $G$ together with a nice tree decomposition of
	width at most $\tw$ and a target density $\taup = \nicefrac{p}{q}$,
	then \BDVD{} can be solved in time $(q\cdot p)^{\bigO(\tw^2)} \cdot n^{\bigO(1)}$ and in time~$(\tw \cdot q)^{\bigO(\tw^2)} \cdot n^{\bigO(1)}$.
\end{restatable}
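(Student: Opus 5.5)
We prove the theorem by standard bottom-up dynamic programming over the given nice tree decomposition. By \cref{lem:charikar} and \cref{lem:q_quantized}, it suffices to decide whether there is a set $S$ with $|S|\le k$ such that $G-S$ admits a $q$-quantized fractional orientation $\phi$ with $\Degphi \le \nicefrac{p}{q}$. Writing each edge share as an integer in $\{0,\dots,q\}$ and each indegree as an integer in $\{0,\dots,p\}$ (measured in units of $\nicefrac1q$), the problem becomes purely combinatorial.

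For a node $t$ with bag $X_t$, a state consists of three pieces of data:
\begin{itemize}
\item a subset $D\subseteq X_t$ of deleted bag vertices;
\item for every edge $uv$ of $G[X_t\setminus D]$, a quantized split $(a_{uv}^u,a_{uv}^v)$ with $a_{uv}^u+a_{uv}^v=q$;
\item a weight function $w\colon X_t\setminus D\to\{0,\dots,p\}$ recording the indegree each kept bag vertex has already received from edges to forgotten, non-deleted vertices.
\end{itemize}
The table entry $c[t,D,a,w]$ stores the minimum number of deleted vertices in $V_t\setminus X_t$ over all partial solutions consistent with the state. Consistency means two things: every forgotten vertex has total indegree at most $p$, and each kept bag vertex $v$ satisfies $w(v)+\sum_{u}a_{uv}^v\le p$.

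The transitions are routine.
\begin{itemize}
\item \emph{Introduce $v$:} either put $v$ into $D$, or keep it with $w(v)=0$ and enumerate all quantized splits of its edges to kept bag vertices, discarding any choice that exceeds $p$.
\item \emph{Forget $v$:} if $v\in D$, add $1$ to the cost. Otherwise check that $w(v)+\sum_u a_{uv}^v\le p$, and for each kept bag neighbour $u$ add $a_{uv}^u$ to $w(u)$, rejecting the state if some weight exceeds $p$. Then minimise over the discarded data.
\item \emph{Join:} the children must agree on $D$ and on $a$. The weights add, $w=w_1+w_2$, and the costs add.
\end{itemize}
Correctness follows because forgotten vertices never reappear, so their indegree is final when they are forgotten, while the weights carry exactly the information that bag vertices need. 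The answer is read off at the root (an empty bag) by comparing the stored value with $k$.

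For the running time, there are at most $2^{\tw+1}$ choices of $D$, at most $(q+1)^{\binom{\tw+1}{2}}$ choices of $a$, and at most $(p+1)^{\tw+1}$ choices of $w$. This gives $(qp)^{O(\tw^2)}$ states per node. Join is the most expensive step: enumerating pairs of weight vectors costs $(p+1)^{O(\tw)}$ per state, which is still within the bound. Hence the total time is $(q\cdot p)^{O(\tw^2)}n^{O(1)}$.

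For the second bound we show that we may assume $p\le \tw\cdot q$. Every graph of treewidth $\tw$ is $\tw$-degenerate, so it has an integral orientation with indegree at most $\tw$: orient each edge towards its later endpoint in the elimination order, i.e.\ towards the vertex removed later. Charikar's characterization then gives $\rho^*(G)\le\tw$. So if $\nicefrac pq\ge \tw$, we answer yes with $S=\emptyset$. Otherwise $p<\tw\cdot q$, and $(qp)^{O(\tw^2)}\le (\tw\, q^2)^{O(\tw^2)}=(\tw\cdot q)^{O(\tw^2)}$.

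The main obstacle is making the state space small enough while staying correct. In particular, the orientation of edges inside the bag must be fixed at introduce time so that both endpoints' indegrees are tracked consistently across join children. Storing $a$ explicitly in the state handles this, and it is exactly what produces the $\tw^2$ in the exponent.
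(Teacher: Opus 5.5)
Your proposal is correct and follows the paper's proof. Both use the same dynamic program over states consisting of the kept bag subset, a $q$-quantized orientation of the bag edges, and weights recording contributions from forgotten neighbours. Both also use the same reduction to $p<\tw\cdot q$. The differences are cosmetic:
\begin{itemize}
\item You charge a deleted vertex when it is forgotten. The paper charges it when introduced and subtracts the double-counted bag deletions at join nodes.
\item You bound $\rho^*(G)\le\tw$ via degeneracy. The paper uses the minimum-degree property of densest subgraphs instead.
\end{itemize}

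There is one small slip in your degeneracy argument. Each edge must be oriented towards the endpoint removed \emph{earlier}, not later. This works because a vertex has at most $\tw$ neighbours still present when it is removed. Orienting towards the later endpoint can give unbounded indegree, e.g.\ at the centre of a star.
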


\appendixproof{thm:fpt_tw_fixed_density}
{
A \emph{tree decomposition}~$(T, \beta)$ of a graph~$G$ consists of a tree~$T$ together with a function
$\beta \colon V(T) \to 2^{V(G)}$, which assigns to each node of~$T$ a subset of vertices of~$G$, called a \emph{bag}, such that:

\begin{enumerate}
    \item For every edge $e = \{u,v\} \in E(G)$, there exists a bag $t \in V(T)$ with $u, v \in \beta(t)$.
    \item For every vertex $u \in V(G)$, the set of nodes $\{t \in V(T) \mid u \in \beta(t)\}$ induces a connected subtree of $T$.
\end{enumerate}

The \emph{width} of a tree decomposition $(T, \beta)$ is defined as
\[
\max_{t \in V(T)} |\beta(t)| - 1,
\]
and the \emph{treewidth} of $G$, denoted $\operatorname{tw}(G)$, is the minimum
width over all possible tree decompositions of $G$.
We call a tree decomposition \emph{optimal} if it attains that minimum~\cite{PARAMALG}.\\

Since dynamic programming over tree decompositions is a common technique for algorithms on graphs of bounded treewidth,
it is convenient to work with \emph{nice tree decompositions}, which have a restricted structure.
We adopt a definition similar to that of Cygan et al.~\cite{PARAMALG}:

\begin{itemize}
    \item For all leaf nodes $\ell \in V(T)$ and for the root node $r \in V(T)$,
          the corresponding bags are empty, that is, $\beta(l) = \beta(r) = \emptyset$.
    \item Every non-leaf node is of exactly one of the following three types:
    \begin{itemize}
        \item \textbf{Introduce node}: a node $t$ with exactly one child $t'$ such that
              $\beta(t) = \beta(t') \cup \{u\}$ for some vertex $u \in V(G)$.
              We say that $u$ is \emph{introduced} at $t$.

        \item \textbf{Forget node}: a node $t$ with exactly one child $t'$ such that
              $\beta(t) = \beta(t') \setminus \{u\}$ for some vertex $u \in V(G)$.
              We say that $u$ is \emph{forgotten} at $t$.

        \item \textbf{Join node}: a node $t$ with exactly two children $t_1, t_2$ such that
              $\beta(t) = \beta(t_1) = \beta(t_2)$.
    \end{itemize}
\end{itemize}

It is known that, for any integer $k$, there exists an algorithm that either computes a tree decomposition of width at most $k$ for a graph $G$, or correctly decides that $\tw(G) > k$. This algorithm runs in time $k^{\bigO(k^3)} \cdot |V(G)|$.

Furthermore, given a tree decomposition $(T, \beta)$ of $G$, one can construct a \emph{nice tree decomposition} of $G$ with at most
$\bigO(\tw \cdot |V(G)|)$ nodes in
\[
\bigO\big(\tw^2 \cdot \max(|V(T)|, |V(G)|)\big)
\]
time \cite{PARAMALG}.\\

When working over a tree decomposition, forgotten vertices still contribute
orientation weight. To keep track of these contributions, we extend the framework
to weighted fractional orientations, which explicitly store these values.

\begin{definition}
    Let $\phi$ be a fractional orientation and let $w \colon V(G) \to \QQ$ be a
	function. Then~$(\phi, w)$ is called a \emph{weighted fractional orientation}.
	Such a function $w$ is called $q$-quantized for some $q \in \NN$, if
	for every vertex $v \in V(G)$ there exists a $p \in \NN$, such that
	$w(v) = \nicefrac{p}{q}$.

    If both $w$ and $\phi$ are $q$-quantized for some $q \in \NN$,
    $(\phi, w)$ is called a \emph{$q$-quantized weighted fractional orientation}.
	Define
    	\[
        	\degexphi(v) \coloneqq \sum_{u \in N(v)} \phi(\{u, v\})^v + w(v)
        	\quad \text{and} \quad
        	\Degexphi(G) \coloneqq \max_{v \in V(G)} \degexphi(v)
    	\]
	where~$\degexphi(v)$ is the \emph{weighted indegree} of a vertex~$v \in V(G)$
	and~$\Degexphi(G)$ is the \emph{maximum weighted indegree} of~$G$ with respect to the
	weighted fractional orientation~$(\phi, w)$.
\end{definition}

\begin{definition}
	Let~$G$ be a graph and let $H \subseteq G$ be a subgraph.
	Let~$(\phi, w)$ be a weighted fractional orientation for~$H$ and let~$\phi'$
	be a fractional orientation for~$G$. Then~$(\phi, w)$ and~$\phi'$ are
	\emph{consistent} if
	\begin{align}
		\phi(e)^u = \phi'(e)^u
	\end{align}
	for all edges $e \in E(H)$ and all vertices $u \in V(H)$, and
	\begin{align}
		w(u) = \sum_{v \in N_G(v) \setminus V(H)} \phi'(\{u, v\})^u .
	\end{align}
\end{definition}

\fpttwfixeddensity*
\begin{proof}
We construct a dynamic programming algorithm.
Let $(T, \beta)$ be a nice tree decomposition of $G$.
For each bag $t \in V(T)$, let $V_t$ denote the set of vertices contained in
the bags of the subtree of $T$ rooted at $t$.

We define a dynamic programming table $c$.
Let $S \subseteq \beta(t)$ and $(\phi, w)$ be a weighted fractional orientation for $G[S]$.
Intuitively, an entry of $c$ stores the following information:
\begin{align*}
	c[t, S, (\phi, w)] =
	\begin{cases}
	    	\text{the minimum number of vertices to delete from $V_t$ in order to obtain}\\
	    	\text{a vertex set $\hat S$ with $S \subseteq \hat S \subseteq V_t$ and
		$\hat S \cap \beta(t) = S$,} \\
	    	\text{together with a fractional orientation $\hat \phi$ of $G[\hat S]$
		that is consistent} \\
	    	\text{with $(\phi, w)$ and satisfies $\Degfrac{\hat \phi} \le \taup$,}
	\end{cases}
\end{align*}
and we set $c[t, S, (\phi, w)] = \infty$ if no such solution exists.

If these conditions are satisfied, then for the root $r$ of the tree decomposition the entry
$c[r, \emptyset, (\mathbf{0}, \mathbf{0})]$ (where $\mathbf{0}$ denotes the zero function) stores the minimum number of vertices that must be deleted from~$G$
to obtain a graph that admits a fractional orientation~$\phi$ with
$\Degphi(G) \le \taup$.
By Lemma~\ref{lem:charikar}, this implies that the resulting graph has a densest subgraph with density at most~$\taup$.

We refer to $[t, S, (\phi, w)]$ as a \emph{state} of the table $c$.
If $c[t, S, (\phi, w)] = \infty$, no value is stored for this state.

\medskip

\paragraph*{Leaf node}
If $t$ is a leaf node, then the only valid state is the empty one, and we set
\[
    	c[t, \emptyset, (\mathbf{0}, \mathbf{0})] = 0.
\]

\medskip

\paragraph*{Introduce node}
Consider an introduce node $t$ that introduces a vertex $v$, and let $t'$ denote
its child.
We consider all subsets $S \subseteq \beta(t)$ and all weighted $q$-quantized
fractional orientations $(\phi, w)$ for $G[S]$.

If $v \notin S$, then $v$ is deleted in the solution. In this case, we set
\[
    c[t, S, (\phi, w)] = c[t', S, (\phi, w)] + 1.
\]

If $v \in S$, then $v$ is retained in the solution.
Let $(\phi', w')$ be the restriction of $(\phi, w)$ to $S \setminus \{v\}$, defined by
\begin{align*}
    	\phi(e)^u &= \phi'(e)^u
    	&& \text{for every edge $e \in E(G[S \setminus \{v\}])$ and
		$u \in S \setminus \{v\}$}, \\
    	w(u) &= w'(u)
    	&& \text{for every $u \in S \setminus \{v\}$}.
\end{align*}
If $\Degfrac{(\phi, w)} > \taup$, we set
\[
    	c[t, S, (\phi, w)] = \infty.
\]
Otherwise, if $\Degfrac{(\phi, w)} \le \taup$, we set
\begin{align}
    	c[t, S, (\phi, w)] = c[t', S \setminus \{v\}, (\phi', w')].
    	\label{intro:equal}
\end{align}

We now show that, assuming the values for all states at node $t'$ are minimal,
the value computed for node $t$ is also minimal.

Suppose $v \notin S$. If $c[t', S, (\phi, w)]$ is computed correctly, then so is
$c[t, S, (\phi, w)]$, since the only difference is the additional deletion of $v$.
Suppose $v \in S$. Let $\hat S$ be a set attaining the minimum in the definition of
$c[t', S \setminus \{v\}, (\phi', w')]$, and let $\hat \phi'$ be a fractional
orientation of $G[\hat S]$ consistent with $(\phi', w')$.
If $\Degfrac{(\phi, w)} > \taup$, then no fractional orientation of
$G[\hat S \cup \{v\}]$ consistent with $(\phi, w)$ can have maximum indegree at
most $\taup$. Hence, $c[t, S, (\phi, w)] = \infty$.

If $\Degfrac{(\phi, w)} \le \taup$, then a fractional orientation~$\hat \phi$ of
$G[\hat S \cup \{v\}]$ exists with
\begin{align*}
    	\hat \phi(e)^u &= \hat \phi'(e)^u
    	&& \text{for every edge $e \in E(G[\hat S])$ and
		$u \in \hat S$}
\end{align*}
and $\hat \phi$ consistent with $(\phi, w)$. Moreover,
since $N(v) \cap V_t \subseteq \beta(t)$, the indegrees of vertices in
$V_t \setminus \beta(t)$ are equal in $\hat \phi'$ and $\hat \phi$. Therefore,
\begin{align}
    	c[t, S, (\phi, w)] \le c[t', S \setminus \{v\}, (\phi', w')].
    	\label{intro:le}
\end{align}

Conversely, let $\hat S$ be a set attaining the minimum in the definition of
$c[t, S, (\phi, w)]$. Then $\hat S$ yields a feasible solution for
$c[t', S \setminus \{v\}, (\phi', w')]$, implying
\begin{align}
    	c[t, S, (\phi, w)] \ge c[t', S \setminus \{v\}, (\phi', w')].
    	\label{intro:ge}
\end{align}

Together, inequalities \eqref{intro:le} and \eqref{intro:ge} establish
\eqref{intro:equal} for the case $v \in S$.

\medskip

\paragraph*{Forget node}
Consider a forget node $t$ that forgets a vertex $v$, and let $t'$ denote its
child.
We consider all subsets $S \subseteq \beta(t)$ and all weighted fractional
orientations $(\phi, w)$ for~$G[S]$.

Let $\Phi$ be the set of all weighted fractional orientations for
$G[S \cup \{v\}]$ such that for each $(\phi', w') \in \Phi$ the following holds:
\begin{align}
    	\phi(e)^u &= \phi'(e)^u
    	&& \text{for all edges $e \in E(G[S])$ and all $u \in S$,}
    	\label{forget:edge_equivalents} \\
    	w(u) &= w'(u) + \phi'(\{u, v\})^u
    	&& \text{for all $u \in S$.}
    	\label{forget:weight_equivalents}
\end{align}
We define
\begin{align}
    	c[t, S, (\phi, w)] =
    	\min \left\{
        	\min_{(\phi', w') \in \Phi}
        	c[t', S \cup \{v\}, (\phi', w')],
        	\;
        	c[t', S, (\phi, w)]
    	\right\}.
    	\label{forget:eq}
\end{align}
We now show that, assuming the values for all states at node $t'$ are minimal,
the value computed for node $t$ is also minimal.

Let $\hat S$ be a set attaining the minimum in the definition of
$c[t, S, (\phi, w)]$, and let $\hat \phi$ be a fractional orientation of
$G[\hat S]$ consistent with $(\phi, w)$ and satisfying
$\Degfrac{\hat \phi} \le \taup$.

If $v \notin \hat S$, then $\hat S$ and $\hat \phi$ are considered in the
definition of $c[t', S, (\phi, w)]$, and hence
\[
    	c[t, S, (\phi, w)] \ge c[t', S, (\phi, w)].
\]

If $v \in \hat S$, then $\hat S$ appears in the definition of
$c[t', S \cup \{v\}, (\phi', w')]$ for some $(\phi', w') \in \Phi$ consistent
with $\hat \phi$, implying
\[
    	c[t, S, (\phi, w)] \ge c[t', S \cup \{v\}, (\phi', w')].
\]

Combining both cases yields
\begin{align}
    	c[t, S, (\phi, w)] \ge
    	\min \left\{
        	\min_{(\phi', w') \in \Phi}
        	c[t', S \cup \{v\}, (\phi', w')],
        	\;
        	c[t', S, (\phi, w)]
    	\right\}.
    	\label{forget:ge}
\end{align}

Conversely, any set $\hat S$ and fractional orientation $\hat \phi$ considered in
the definition of $c[t', S, (\phi, w)]$ are also considered in the definition of
$c[t, S, (\phi, w)]$. Thus,
\begin{align}
    	c[t, S, (\phi, w)] \le c[t', S, (\phi, w)].
    	\label{forget:le_v_not_in_S}
\end{align}

Now let $\hat S \subseteq V_{t'}$ with $v \in \hat S$, and let $\hat \phi$ be a
fractional orientation for $G[\hat S]$ considered in the definition of
$c[t', S \cup \{v\}, (\phi', w')]$.
Here, $S \cup \{v\} = \hat S \cap \beta(t')$, and $(\phi', w')$ is consistent with
$\hat \phi$.
If $(\phi, w)$ is any weighted fractional orientation extending $(\phi', w')$
via \eqref{forget:edge_equivalents} and \eqref{forget:weight_equivalents}, then
$\hat S$ and $\hat \phi$ are also considered in the definition of
$c[t, S, (\phi, w)]$. Hence,
\[
    	c[t, S, (\phi, w)] \le c[t', S \cup \{v\}, (\phi', w')].
\]
Since this holds for all $(\phi', w') \in \Phi$, we obtain
\begin{align}
    	c[t, S, (\phi, w)] \le
    	\min_{(\phi', w') \in \Phi}
    	c[t', S \cup \{v\}, (\phi', w')].
    	\label{forget:le_v_in_S}
\end{align}

Combining \eqref{forget:le_v_not_in_S} and \eqref{forget:le_v_in_S}, we arrive at
\begin{align}
    	c[t, S, (\phi, w)] \le
    	\min \left\{
        	\min_{(\phi', w') \in \Phi}
        	c[t', S \cup \{v\}, (\phi', w')],
        	\;
        	c[t', S, (\phi, w)]
    	\right\}.
    	\label{forget:le}
\end{align}

Together, inequalities \eqref{forget:ge} and \eqref{forget:le} establish the
desired equality \eqref{forget:eq}.

\paragraph*{Join node}
Consider a join node $t$ and let $t_1$ and $t_2$ denote its children.
Let $S \subseteq \beta(t) = \beta(t_1) = \beta(t_2)$ be arbitrary, and let
$(\phi, w)$ be a weighted fractional orientation for~$G[S]$.

Let $\Phi$ be the set of all pairs of weighted fractional orientations
$((\phi_1, w_1), (\phi_2, w_2))$ for~$G[S]$ such that the following holds:
\begin{align}
    	\phi(e)^u &= \phi_1(e)^u = \phi_2(e)^u
    	&& \text{for all edges $e \in E(G[S])$ and all $u \in S$,}
    	\label{join:edge_equality} \\
    	w(u) &= w_1(u) + w_2(u)
    	&& \text{for all $u \in S$.}
    	\label{join:weight_equality}
\end{align}
We define
\begin{align}
    	c[t, S, (\phi, w)] =
    		\min_{((\phi_1, w_1), (\phi_2, w_2)) \in \Phi}
    		\bigl(
        		c[t_1, S, (\phi_1, w_1)]
        		+
        		c[t_2, S, (\phi_2, w_2)]
        		-
        		(|\beta(t)| - |S|)
    		\bigr).
    	\label{join:eq}
\end{align}
We now show that, assuming the values for all states of $t_1$ and $t_2$ are
minimal, the value computed for $t$ is also minimal.

Let $\hat S$ be a set attaining the minimum in the definition of
$c[t, S, (\phi, w)]$, and define
$\hat S_1 = \hat S \cap V_{t_1}$ and
$\hat S_2 = \hat S \cap V_{t_2}$.
Then
\[
    c[t_1, S, (\phi_1, w_1)] \le |V_{t_1}| - |\hat S_1|
    \quad \text{and} \quad
    c[t_2, S, (\phi_2, w_2)] \le |V_{t_2}| - |\hat S_2|.
\]

Since $\hat S_1 \cap \hat S_2 = S$, we obtain
\begin{align}
\begin{split}
    	c[t, S, (\phi, w)]
    	&=
    	|V_{t_1}| - |\hat S_1|
    	+
    	|V_{t_2}| - |\hat S_2|
    	-
	(|\beta(t)| - |S|) \\
	&\ge
	c[t_1, S, (\phi_1, w_1)]
	+
	c[t_2, S, (\phi_2, w_2)]
	-
	(|\beta(t)| - |S|).
\end{split}
\label{join:ge}
\end{align}

Conversely, let $((\phi_1, w_1), (\phi_2, w_2)) \in \Phi$
be a pair attaining the minimum in \eqref{join:eq}.
Let~$\hat S_1 \subseteq V_{t_1}$ and~$\hat S_2 \subseteq V_{t_2}$ be sets attaining
the minima in the definitions of
$c[t_1, S, (\phi_1, w_1)]$ and $c[t_2, S, (\phi_2, w_2)]$, respectively.
Let $\hat \phi_1$ and $\hat \phi_2$ be fractional orientations of~$G[\hat S_1]$
and~$G[\hat S_2]$ consistent with $(\phi_1, w_1)$ and
$(\phi_2, w_2)$, respectively.

Define $\hat S = \hat S_1 \cup \hat S_2$.
Since $\hat S_1 \cap \hat S_2 = S$, it follows that
$\hat S \cap \beta(t) = S$.
Moreover,~$\hat \phi_1$ and~$\hat \phi_2$ agree on all edges of~$G[S]$ by~\eqref{join:edge_equality},
and thus together define a fractional orientation~$\hat \phi$ of~$G[\hat S]$
that is consistent with~$(\phi, w)$.

The indegree of each vertex $u \in \hat S$ under $\hat \phi$ is the sum of its
indegrees under $\hat \phi_1$ and $\hat \phi_2$ minus the contribution from edges
entirely contained in $G[S]$.
By \eqref{join:weight_equality}, this is exactly captured by $w(u)$, and hence
$\Degfrac{\hat \phi} \le \taup$.

Therefore,
\begin{align}
\begin{split}
    c[t, S, (\phi, w)]
    &\le
    c[t_1, S, (\phi_1, w_1)]
    +
    c[t_2, S, (\phi_2, w_2)]
    -
    (|\beta(t)| - |S|).
\end{split}
\label{join:le}
\end{align}

Combining \eqref{join:ge} and \eqref{join:le} establishes \eqref{join:eq}.

\bigskip

\paragraph*{Running time}
We first examine the size of the dynamic programming table $c$.
For each bag $t$, we consider all subsets $S \subseteq \beta(t)$ and,
for each such subset, all possible weighted fractional orientations.
The number of edges in $G[\beta(t)]$ is at most $\tw^2$, and each edge
has~$q+1$ possible orientations due to $q$-quantization.
Considering the weights, each vertex in $S$ can be assigned a value between~$0$
and~$\taup$, giving at most~$p+1$ possible weights per vertex.
Hence, the number of possible weighted fractional orientations per subset is bounded by
\[
    	(q+1)^{\tw^2} \cdot (p+1)^{\tw+1} \le ((q+1) \cdot (p+1))^{\tw^2+1}.
\]
Since there are at most $2^{\tw+1}$ subsets, the table size for each bag is bounded by
\[
    	2^{\tw+1} \cdot ((q+1) \cdot (p+1))^{\tw^2+1} \le (2 \cdot (q+1) \cdot (p+1))^{\tw^2+1}.
\]

The most time-consuming operation is the minimization over all pairs of
weighted fractional orientations within a bag, which is quadratic in the table size.
Therefore, the overall running time of the algorithm is
\[
    	(2 \cdot (q+1) \cdot (p+1))^{\bigO(\tw^2)} \cdot n^{\bigO(1)}.
\]

This concludes the proof of the first stated running time.
The second stated running time can be derived as follows.

\begin{lemma}[Bazgan et al.~\cite{BNV25}]
	Let $G^*$ be a densest subgraph of $G$ with $\rho(G^*) = \rho^*(G)$.
	Then the minimum degree in $G^*$ is at least $\lceil \rho(G^*) \rceil$.
	\label{lem:min_degree_densest_subgraph}
\end{lemma}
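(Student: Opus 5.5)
Let $G^*$ be a densest subgraph of $G$ with $\rho(G^*)=\rho^*(G)$, and write $\rho^* \ceq \rho(G^*)$. Suppose, for contradiction, that some vertex $v \in V(G^*)$ has $\deg_{G^*}(v) < \lceil \rho^* \rceil$. Since degrees are integers, this is equivalent to $\deg_{G^*}(v) < \rho^*$. The plan is to delete $v$ and show that the density does not drop, which contradicts either maximality or a minimality choice.

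First I handle the degenerate cases. If $\rho^* = 0$, the claim is trivial. Otherwise $G^*$ has an edge, and so $|V(G^*)| \ge 2$. Hence $G^* - v$ is non-empty, with $|V(G^*)|-1$ vertices and $|E(G^*)| - \deg_{G^*}(v)$ edges.

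Write $m = |E(G^*)|$ and $n^* = |V(G^*)|$, so that $m = \rho^* n^*$. Then
\[
\rho(G^*-v) = \frac{m - \deg_{G^*}(v)}{n^*-1} > \frac{\rho^* n^* - \rho^*}{n^*-1} = \rho^*.
\]
The inequality is strict because $\deg_{G^*}(v) < \rho^*$. So $G^* - v$ is a subgraph of $G$ with density strictly larger than $\rho^*(G)$, which contradicts the definition of $\rho^*(G)$ as a maximum.

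Hence every vertex of $G^*$ has degree at least $\rho^*$. Since degrees are integers, every degree is at least $\lceil \rho^* \rceil$.

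The argument has no real obstacle. The only points needing care are checking that $G^*-v$ is non-empty, so that its density is the edge/vertex ratio, and noting that the integrality of degrees turns $\deg \ge \rho^*$ into $\deg \ge \lceil\rho^*\rceil$. Note also that $G^*$ may be any subgraph attaining the maximum, not necessarily an induced one; the argument does not use inducedness.
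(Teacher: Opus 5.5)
Your proof is correct. For integer degrees, $\deg_{G^*}(v) < \lceil \rho^* \rceil$ is equivalent to $\deg_{G^*}(v) < \rho^*$, so deleting $v$ gives the strictly denser subgraph $G^*-v$ and contradicts the maximality of $\rho^*(G)$; you also handle the case $\rho^*=0$ and check that $G^*-v$ is non-empty. The paper gives no proof of its own here (it cites the lemma from Bazgan et al.), and your vertex-deletion argument is the standard proof of this fact. Because the inequality is strict, only maximality is used, so the ``minimality choice'' mentioned in your opening is never needed.
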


For any graph~$G$ it holds that $\rho^*(G) \le \tw(G)$.
To show this let $G^*$ be a densest subgraph of $G$.
Clearly, $\tw(G^*) \le \tw(G)$ holds and hence there exists a vertex~$v \in V(G^*)$ with $\deg_{G^*}(v) \le \tw(G^*) \le \tw(G)$.
With Lemma~\ref{lem:min_degree_densest_subgraph},
this implies that~$\rho(G^*) \leq \tw(G)$.

Hence, if $\taup \ge \tw$, the problem is trivial.
We may therefore assume $\taup < \tw$, and consequently $p < \tw \cdot q$.
Applying the algorithm above, the runtime is then bounded by~$(2 (q+1) \cdot (\tw \cdot q + 1))^{\bigO(\tw^2)} \cdot n^{\bigO(1)}$.

\end{proof}
}

\subsection{Vertex Integrity}
In the previous subsections we established fixed-parameter tractability of \bdvd for restricted target density.
As we will see in \cref{sec:hardness} the problem becomes much harder in case we allow arbitrary target densities.
For this setting we thus turn to the larger parameter vertex integrity~$\vi$ of the input graph~$G$ and show that \bdvd is in~$\FPT$ parameterized by~$\vi$.
The \emph{vertex integrity} of a graph~$G$ is the smallest number~$\vi$ of vertices such that we can remove~$\vi$ vertices and every component in the obtained graph has size at most~$\vi$.
Note that such a witness set of at most~$\vi$ vertices can be computed in FPT time \cite{DrangeDH16}.
The parameter~$\vi$ lies between the vertex cover number and the treedepth \cite{GimaHKKO22}.
In this sense our results improves a result from Bazgan et al.~\cite{BNV25}, who showed that \bdvd is in FPT for the vertex cover number, and provides a sharp bound for the tractability in the current landscape of structural parameters.

\begin{theorem}
\label{thm:vertex_integrity}
	Given a witness for the vertex integrity, \BDVD{} can be solved in~$2^{\vi \cdot 2^{O(\vi^2)}}n^{O(1)}$.
\end{theorem}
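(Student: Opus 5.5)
Let $X$ be the given witness with $|X| \le \vi$, so every component of $G - X$ has at most $\vi$ vertices. First we guess which vertices of $X$ are deleted; this gives $2^{\vi}$ branches. In each branch, let $X' \subseteq X$ be the surviving vertices. Next we classify the components of $G-X$ into \emph{types}. Two components have the same type if there is an isomorphism between them that also preserves adjacency to every vertex of $X$. A component has at most $\vi$ vertices, each with one of $2^{\vi}$ neighbourhoods in $X$, so the number of types is $2^{O(\vi^2)}$. For each component we must also choose which of its vertices survive. For a component of type $\tau$ this choice yields a \emph{reduced type}: the induced remaining graph together with its attachments to $X'$. There are again $2^{O(\vi^2)}$ reduced types.

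The key step is to express feasibility as an ILP using the fractional-orientation characterization (\cref{lem:charikar}). The remaining graph has density at most $\taup$ iff it has a fractional orientation with every indegree at most $\taup$. For a surviving component of reduced type $\sigma$, the only interaction with the rest of the graph is through the edges to $X'$. Orienting the edges inside the component together with the edges to $X'$ sends some amount of indegree to each vertex of $X'$ and must keep the component's own vertices within $\taup$. For a fixed $\sigma$, the set $P_\sigma \subseteq \mathbb{R}^{X'}$ of achievable load vectors on $X'$ is a polytope. Let $e_\sigma$ be the number of edges of a reduced component of type $\sigma$ that are incident to it, counting both internal edges and edges to $X'$. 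The indegree such a component must absorb is $e_\sigma$ minus the total load it sends to $X'$. The loads sent from many components of the same reduced type then form a sum of points of $P_\sigma$, and by convexity such a sum equals $n_\sigma \cdot P_\sigma$, where $n_\sigma$ is the number of components of that reduced type.

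We therefore set up the following program. The integer variables are $x_{\tau,\sigma}$, the number of components of type $\tau$ reduced to $\sigma$. They must satisfy $\sum_\sigma x_{\tau,\sigma}$ equal to the number of components of type $\tau$, and the total number of deletions, $\sum (|\tau|-|\sigma|)\,x_{\tau,\sigma}$ plus $|X\setminus X'|$, must be at most $k$. The rational variables are $y_{\sigma,u}$ for $u \in X'$, giving the aggregate load that components of reduced type $\sigma$ send to $u$. Here $y_{\sigma,\cdot}$ must lie in $n_\sigma P_\sigma$ with $n_\sigma=\sum_\tau x_{\tau,\sigma}$. For each $u\in X'$, the sum over $\sigma$ of $y_{\sigma,u}$, plus the indegree $u$ receives from edges inside $G[X']$, must be at most $\taup$. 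The edges inside $G[X']$ are handled by rational variables for their orientation. The constraint $y_{\sigma,\cdot} \in n_\sigma P_\sigma$ is linear: write $P_\sigma=\{z : A_\sigma z\le b_\sigma\}$, and then the constraint becomes $A_\sigma y \le b_\sigma n_\sigma$. This produces a mixed ILP with $2^{O(\vi^2)}$ integer variables and a bounded number of constraints. Lenstra/Kannan solves it in time $(2^{O(\vi^2)})^{2^{O(\vi^2)}} = 2^{\vi\cdot 2^{O(\vi^2)}}$ up to polynomial factors. Multiplying by the $2^{\vi}$ branches gives the claimed bound.

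The main obstacle is justifying this aggregation. One direction needs the observation that a sum of $n$ points of a convex set $P$ lies in $nP$. The converse needs that any point of $nP$ can be split into $n$ points of $P$, namely $n$ copies of the point divided by $n$. The second issue is computing an inequality description of $P_\sigma$ whose size depends only on $\vi$. $P_\sigma$ is a projection of a polytope with $O(\vi^2)$ variables, namely the edge fractions. We can obtain it by Fourier--Motzkin elimination, or we can avoid the projection altogether by introducing per-type aggregate edge-orientation variables scaled by $n_\sigma$, since all constraints are homogeneous in $n_\sigma$ except the right-hand sides $\taup n_\sigma$. I would use the latter to keep the program linear and the number of variables bounded by a function of $\vi$.
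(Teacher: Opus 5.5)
Your proposal is correct, but it takes a genuinely different route from the paper. The two arguments share the branching on $X$ and the variables counting how many components of type $\tau$ become reduced type $\sigma$. They differ in how the density bound is certified.

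\textbf{The paper's route.} It works directly with subgraph densities. An exchange argument shows that a densest subgraph of the remaining graph can be assumed to pick the same sub-pattern in every component of the same type. Hence it suffices to impose one linear inequality $q\cdot|E(H)|\le p\cdot|V(H)|$ for every subgraph $W$ of the surviving part of $X$ and every assignment of a subtype to each type. This gives a pure ILP in the counting variables alone.

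\textbf{Your route.} You use the dual side: a fractional orientation certifies the bound (\cref{lem:charikar}). You then aggregate the orientations of all components of one reduced type via convexity ($P+\dots+P=n_\sigma P$ for convex $P$) and homogenization. The trade-offs are these:
\begin{itemize}
\item You need no structural claim about densest subgraphs.
\item Your system is compact, with $2^{O(\vi^2)}$ constraints. The paper enumerates subtype assignments, so its number of constraints grows exponentially in the number of types. This is harmless there, since it only affects the encoding length that Lenstra's algorithm depends on polynomially.
\item The price is that you get a mixed ILP. That is fine, because Lenstra's result covers mixed programs with a bounded number of integer variables. Alternatively, \cref{lem:q_quantized} lets you scale all orientation variables by $q$ and require integrality.
\end{itemize}

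Of your two encodings of $y_{\sigma,\cdot}\in n_\sigma P_\sigma$, the lifted homogenized one is the right choice. With the projected description $A_\sigma y\le n_\sigma b_\sigma$, the case $n_\sigma=0$ forces $y_{\sigma,\cdot}=0$ only when $P_\sigma$ is a nonempty polytope. In the lifted version, $Z_e^u+Z_e^v=n_\sigma$ together with $Z\ge 0$ forces $Z=0$ automatically.
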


\begin{proof}
 Let~$(G, k, \taup = \nicefrac{p}{q})$ be a \bdvd-instance and~$|X| \leq \vi$ a set of vertices such that every component of~$G-X$ has size at most~$\vi$.
 By branching over all possible subsets of~$X$, we can
 guess which vertices of~$X$ we do and do not choose into the solution.
 For the rest of the proof we therefore assume no vertex from the set~$X$ is deleted, by considering only vertices not taken into the solution by the branching.
 The idea of the algorithm is to formulate an ILP which decides how many components with the same 'type' are transformed in the same way by deleting the vertices of a solution.

 We say two components~$C_1$ and~$C_2$ of~$G-X$ have the same \emph{type}~$t$ if there exists a bijection~$\phi : V(C_1) \rightarrow V(C_2)$ such that for all~$u \in X, v,v' \in V(C_1)$
 \begin{itemize}
	\item $uv \in E(G)$ if and only if~$u\phi(v) \in E(G)$ and
	\item $vv' \in E(G)$ if and only if~$\phi(v)\phi(v') \in E(G)$.
 \end{itemize}
 The number of different types is bounded by~$\vi \cdot 2^{2\vi^2}$.

 For every component type~$t$ of~$G-X$ and for every component type~$s$ which can be obtained by deleting vertices from a component with type~$t$, we introduce one variable~$x_{t}^s$.
 Intuitively, the value of the variable corresponds to the number of components with type~$t$ which are turned into components with type~$s$ when deleting the vertices of a solution.
 If type~$s$ can be obtained from~$t$, then we write~$s \subseteq t$.
 Further, we denote by~$k_t^s$ the number of vertices which need to be deleted in this process and by~$n_t$ the number of components with type~$t$ in~$G-X$.
 This allows us to formulate the following objective function and the first constraints:
 \begin{align}
	\min \quad & \sum_{t} \sum_{s, s \subseteq t} k_t^sx_t^s \notag \\
	\text{s.t.} \quad
	& \sum_{s, s\subseteq t} x_{t}^s = n_t \quad && \forall \text{ types } t \text{ of } G-X.\label{cons:component_numbers} \\
	& 0 \leq x_t^s \quad && \forall \text{ types } t,s \text{ of } G-X.\label{cons:component_numbers}
 \end{align}

 Let~$G'$ be the graph obtained from a solution satisfying the above constraints, meaning~$G'$ is the subgraph of~$G$ which contains~$X$ and for any type~$s$, the graph~$G'-X$ has~$\sum_{t, s\subseteq t} x_t^s$ components of type~$s$.
 It remains to add constraints which ensure $G'$ contains no subgraph with density larger than~$\taup = \nicefrac{p}{q}$.
 \begin{claim}
  Suppose~$H$ is a densest subgraph in~$G'$ and let~$C$ and~$C'$ be two components of the same type in~$G'-X$.
  We can then assume that~$H$ restricted to~$C$ and~$C'$ yields again two components with the same type.
 \end{claim}
 \begin{claimproof}
  Since~$C$ and~$C'$ are of the same type in~$G'-X$, there exists a bijection~$\phi:V(C)\to V(C')$.
  Without loss of generality assume~$\rho(H[V(C) \cup X])\geq\rho(H[V(C') \cup X])$.
  Then updating~$H$ such that for any~$v \in V(C)$ it holds that~$\phi(v) \in V(H) \iff v \in V(H)$¸ cannot decrease~$\rho(H)$.
 \end{claimproof}

 As a consequence we only need to care about the density of subgraphs which behave the same on all components with the same type.
 This allows us to formulate density constraints that ensure that in each remaining subgraph~$H$ we will have~$|E(H)| / |V(H)| \le \taup$, or equivalently~$q\cdot|E(H)| \le p \cdot |V(H)|$.
 A subtype assignment~$w$ assigns each type~$s$ one type~$w(s)$ such that~$w(s) \subseteq s$; the idea is that~$w(s)$ indicates the part of~$s$ that are in the subgraph~$H$.
 Let~$W \subseteq G[X]$.
 We denote the number of vertices in a component of type~$w(s)$ by~$n_{w(s)}$ and the number of edges in the component and between the component and~$W$ by~$m_{w(s)}^W$.
 Finally, for each subgraph~$W \subseteq G[X]$ and subtype assignment~$w$ we introduce the following constraint that ensures~$q\cdot|E(H)| \le p \cdot |V(H)|$ for the corresponding subgraph:

 \begin{align}
	q\cdot\big( |E(W)| + \sum_s \sum_{s\subseteq t} m_{w(s)}^W x_t^s  \big) \leq  p\cdot\big( |V(W)| + \sum_s \sum_{s\subseteq t} n_{w(s)} x_t^s  \big). \label{cons:component_numbers}
 \end{align}

 The ILP has~$2^{O(\vi^2)}$ many variables and~$2^{O(\vi^2)}$ many constraints.
 Thus, the ILP can be solved in~$2^{\vi \cdot 2^{O(\vi^2)}} n^{O(1)}$ time \cite{FrankT87, Kannan87, Lenstra83}.
\end{proof}

\subsection{Max Leaf Number}
The \emph{max leaf number}~$\ml$ of a graph~$G$ is the maximum number of leaves in any spanning tree of~$G$.
It is a rather large parameter since on graphs with bounded max leaf number the number of high-degree vertices is also bounded.
To be more precise, let~$V_{\geq 3}$ be the set of vertices with degree at least three in~$G$, then the size of~$V_{\geq 3}$ is bounded by~$O(\ml^2)$ \cite{Eppstein15, GarvardtK24}.
The next theorem uses this fact to show that \BDVD is in FPT with respect to the max leaf number.
In case the target density is greater than one, it is sufficient to branch on all possible sets of vertices in~$V_{\geq 3}$ to take into a solution.
In case the target density is below one, after guessing which vertices form~$V_{\geq 3}$ to take into a solution and how to cut the paths between them we require an ILP to calculate whether all resulting connected components can be made small enough by well-placed cuts.

\begin{restatable}[\appref{thm:max_leaf_number}]{theorem}{maxleafnumber}
\label{thm:max_leaf_number}
	\BDVD can be solved in~$\ml^{O(\ml^2)} \cdot n^{O(1)}$ running time.
\end{restatable}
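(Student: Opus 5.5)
Let $X \ceq V_{\geq 3}$, so $|X| \in O(\ml^2)$. Since every vertex outside $X$ has degree at most two, $G - X$ is a disjoint union of paths and cycles. Each such path is attached to $X$ only at its endpoints. We may assume $G$ is connected; otherwise we treat components separately, and components without $X$-vertices are paths or cycles, which are easy. Then the connected components of $G-X$ are \emph{threads}: maximal paths of degree-two vertices whose ends attach to one or two vertices of $X$. There are $O(\ml^2)$ threads, since the graph obtained by contracting threads into edges has $O(\ml^2)$ vertices and bounded max leaf number, hence $O(\ml^2)$ edges. First we branch over all $2^{O(\ml^2)}$ subsets $S_X \subseteq X$ of vertices to delete, and from then on we forbid further deletions in $X$. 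We split into cases by the value of $\taup$.

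\emph{Case $\taup \ge 1$.} A graph of maximum degree two has density at most $1$, and a subgraph of density larger than $1$ must contain a subgraph of minimum degree at least two with more edges than vertices. In such a subgraph, every thread is either used entirely or not at all. Deleting an inner thread vertex only destroys that thread, and deleting its endpoint in $X$ would have the same or a better effect. So we may assume the solution lies inside $X$ apart from the trivial fact that threads are irrelevant for the choice. Hence we simply check, for each $S_X$ with $|S_X|\le k$, whether $\rho^*(G-S_X)\le\taup$ using a polynomial-time densest subgraph computation. This takes $2^{O(\ml^2)} n^{O(1)}$ time.

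\emph{Case $\taup < 1$.} Now every remaining component must be a tree, so we need a forest. For a tree $T$ we have $\rho^*(T)=\rho(T) = 1 - 1/|V(T)|$, since a subtree with the most vertices is densest. Thus $\taup<1$ translates into an upper bound $L$ on the size of the remaining components; take $L$ maximal with $1-1/L\le\taup$. For each thread we additionally guess, with $O(1)$ options per thread, whether it is cut at all. If it is, we guess whether it is cut once or more times, and which of its two end-segments remain attached to which $X$-vertex. This costs $2^{O(\ml^2)}$ branches. After this guessing, the components containing $X$-vertices are determined combinatorially except for the lengths of the attached end-segments. The remaining freedom is numeric: for each cut thread we choose the lengths $a_i,b_i\ge 0$ of its two end-pieces. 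Inside the middle part, which is a path of length $\ell_i - a_i - b_i - 2$ after two deletions, the cheapest way to split a path of length $m$ into pieces of size at most $L$ costs $\lfloor m/(L+1)\rfloor$ deletions. We write an ILP with variables $a_i,b_i$ and auxiliary integer variables $y_i$ for the number of internal cuts, with constraints $y_i(L+1)\ge \ell_i-a_i-b_i-2-L$. For every component $K$ determined by the guess, we require $|K\cap X| + \sum(\text{attached segment lengths})\le L$, and that $K$ is acyclic, which we check in the guess. The total number of deletions is bounded by $k - |S_X|$. The ILP has $O(\ml^2)$ variables, so Lenstra-type algorithms~\cite{FrankT87, Kannan87, Lenstra83} solve it in $\ml^{O(\ml^2)} n^{O(1)}$ time, which dominates the overall running time.

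The main obstacle is the second case: we must argue rigorously that this guessing captures every optimal solution. The key points are that one cut per thread side suffices to detach the $X$-part, and that the interior cost formula $\lfloor\cdot\rfloor$ is correct and can be linearized via $y_i$. Uncut threads whose two ends lie in the same component can create cycles; these we rule out during the guessing. The first case also needs a careful exchange argument, namely that deleting thread vertices is never needed when $\taup\ge 1$. The cleanest version of that argument is that any subgraph of density above $1$ has a 2-core in which threads appear whole, and any such thread can be broken by deleting its $X$-endpoint instead.
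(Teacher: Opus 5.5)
Your proof is correct. Its skeleton matches the paper's, but the ILP in the case $\taup<1$ is encoded differently.

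\textbf{Shared structure.} For $\taup\ge 1$ you use the same exchange argument: a violating subgraph can be pruned to minimum degree two, so it uses threads whole, and any deletion inside a thread can be moved to an endpoint in $V_{\geq 3}$. For $\taup<1$ you likewise branch over deletions in $V_{\geq 3}$ and over which paths are cut, then solve a Lenstra-type ILP with $O(\ml^2)$ variables.

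\textbf{Case $\taup=1$.} You fold this into the exchange argument. That is valid, since a violating subgraph still has density strictly above $1$ and hence minimum degree at least two. The paper instead defers $\taup=1$ to \cref{thm:fpt_cw_fixed_density}, although its own exchange argument would also cover it.

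\textbf{The ILP.} The paper proceeds as follows:
\begin{itemize}
\item it removes degree-one vertices using counters $c_v$;
\item it fixes the number of deletions on each path by branching on the residue of $n_P$ modulo $q+1$, giving the classes $\mathcal{P}_{\ell\cdot(q+1)}$, $\mathcal{P}_{\lceil R\rceil}$, $\mathcal{P}_{\lfloor R\rfloor}$ with several boundary exceptions;
\item it uses one offset variable $x_P$ per path and assumes periodic cuts, from which both attached segment lengths follow.
\end{itemize}
You instead leave both end-segment lengths $a_i,b_i$ free and charge the interior deletions through $y_i(L+1)\ge \ell_i-a_i-b_i-2-L$. This is exactly $y_i\ge\lfloor(\ell_i-a_i-b_i-2)/(L+1)\rfloor$, the optimal number of cuts of the middle piece, so feasibility of your ILP matches the existence of a solution consistent with the guess. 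Your encoding costs a constant factor more variables. In return it needs only a three-way guess per thread (uncut, cut once, cut more than once) and avoids the residue case analysis, so its correctness is easier to verify.

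\textbf{One detail to spell out.} Your threads tacitly have both ends in $X$, so paths ending in a leaf of $G$ need separate treatment. If such a path is cut, its tail is not attached to any vertex of $X$ and must itself be split into pieces of size at most $L$. This needs its own constraint of the same form as yours. The paper avoids the issue with its counters and the degree-one reduction rule.
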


\appendixproof{thm:max_leaf_number}{
\maxleafnumber*
\begin{proof}
	Let~$(G,k, \taup)$ be a \BDVD-instance.
	We distinguish between the cases~$\taup > 1$ and~$\taup < 1$ (and note that~$\taup=1$ is covered by \cref{thm:fpt_cw_fixed_density}).

	$\taup > 1$: Since any densest subgraph~$H$ of~$G$ has minimum degree at least~$ \lceil \rho(H) \rceil$ \cite{BNV25}, we can iteratively remove all degree one vertices and assume that the minimum degree in~$G$ is two.
	We show that we can restrict ourselves to delete vertices from~$V_{\geq 3}$, which directly implies a brute force algorithm running in~$2^{O(\ml^2)} \cdot n^{O(1)}$ time.
	Suppose there is a set of vertices~$X$ with~$\rho^*(G-X) \leq \taup$ such that~$X$ contains a vertex~$v \notin V_{\geq 3}$.
	The vertex~$v$ lies on a path~$P$ between two vertices~$u$ and~$w$ from~$V_{\geq 3}$.
	Consider~$X' := (X \setminus \{v\}) \cup \{w\}$ and assume towards a contradiction that~$G-X'$ contains a densest subgraph~$H$ with density greater than~$\taup$.
	Since the minimum degree in~$H$ is at least two, we conclude that~$H$ contains no vertex from~$P\setminus \{u,w\}$.
	Thus~$H$ would also be too dense in~$G-X$ which is a contradiction.

	$\taup < 1$: W.l.o.g. the target density is~$\taup = 1-\nicefrac{1}{q}$ \cite{NICH26}.
	This means we are looking for a set~$X$ of vertices such that each component of~$G-X$ is a tree on at most~$q$ vertices. 
	In the first step we get rid of all degree one vertices.
	For this, we initialize a counter~$c_v$ for every vertex~$v$ with an initial value of one and from now on, think about the following problem.
	\problemDef{Annotated Bounded Tree Vertex Deletion}
{
	A graph $G$, a number~$c_v$ for every vertex~$v$ and two integers $k,q \in \NN$.
}
{
	Is there a subset $S \subseteq V(G)$ with $|S| \le k$ such that each component~$C$ of~$G-S$ is a tree with~$\sum_{v \in V(C)} c_v \leq q$?
}
	Intuitively, the counter~$c_v$ keeps track of how many vertices 'hide' behind the vertex~$v$ including itself.
	By exhaustively applying the following reduction rule, we are able to remove all degree one vertices from the graph.
	\begin{reduction}
	\label{red_rule_1}
		Le~$u$ be a degree one vertex with neighbor~$v$.
		Remove~$u$ from the graph and increase the counter~$c_v$ by the value~$c_u$.
		If~$c_v$ is greater than~$q$ afterwards, then add~$v$ to~$X$ and remove it from~$G$.
	\end{reduction}
	In the next step, we try every possible subset~$V'$ of the remaining vertices from~$V_{\geq 3}$\footnote{The set~$V_{\geq 3}$ will always denote vertices of the original graph (before any data reduction) of degree at least three.} as a part of the solution~$X$.
	For every such subset~$V'$ we proceed as follows.
	We add~$V'$ to~$X$, remove~$V'$ from the graph and again exhaustively apply \cref{red_rule_1} to get rid of all degree one vertices.
	If~$|X|$ becomes greater than~$k$, then the solution can be discarded and we can continue with the next set~$V'$.
	Otherwise we enter the final phase in which we check via an ILP if we can delete internal vertices from the remaining paths such that each component is a tree of size at most~$q$.

	Note that all vertices~$v$ in the remaining graph with~$c_v > 1$ are vertices from the set of vertices with degree at least three in~$G$, that is, from the set~$V_{\geq 3}$.
	We try every possible subset~$\mathcal{P}$ of paths between these vertices as paths from which the solution contains at least one vertex.
	There are at most~$2^{O(\ml^2)}$ such sets~$\mathcal{P}$ as~$\sum_{v \in V_{\geq 3}} \deg(v) \leq 42 \ml^2$ \cite{GarvardtK24}.
	Let~$H_{\mathcal{P}}$ be the graph which is obtained by deleting every internal vertex from all paths of~$\mathcal{P}$ and let~$\mathcal{C_\mathcal{P}}$ be the set of components of~$H_{\mathcal{P}}$.
	If any component from~$\mathcal{C_\mathcal{P}}$ contains a cycle or has size greater than~$q$, then the current solution can be discarded and we continue with the next set of paths.
	Otherwise, it remains to check if vertices from the paths can be added to~$X$ in such a way that every component of~$G-X$ has size at most~$q$.
	We formulate this problem as an ILP feasibility problem.
	For every path~$P \in \mathcal{P}$ we fix one endpoint as the left endpoint~$l_P$ and one endpoint as the right endpoint~$r_P$.
	The idea is to introduce one variable~$x_P$ for every path~$P \in \mathcal{P}$ where the~$x_P$-th vertex of~$P$ starting from the neighbor of~$l_P$ should be contained in the solution~$X$.
	Since every component has size at most~$q$, we then assume that also the~$x_P + q + 1$-th vertex, the~$x_P + 2(q + 1)$-th vertex and so on are contained in the solution.
	In particular, given the value of~$x_P$, we can derive how many vertices are added to the components of~$l_P$ and~$r_P$ from~$\mathcal{C_\mathcal{P}}$ respectively.
	To be able to derive these numbers in the final ILP, we need to distinguish between different kinds of paths in~$\mathcal{P}$ depending on their length.

	We denote the number of vertices in the path~$P \in \mathcal{P}$ by~$n_P$ and distinguish between two cases in the first step.
	\begin{itemize}
	 \item $\mathcal{P}_{\ell\cdot(q+1)}$ contains the paths~$P$ for which~$n_P$ is a multiple of~$q+1$.
	 \item $\mathcal{P}_{R}$ contains the rest of the paths.
	\end{itemize}
	If~$P \in \mathcal{P}_{\ell\cdot(q+1)}$ with~$n_P = \ell_P (q+1)$ for some number~$\ell_P$, then the solution contains exactly~$\ell_P$ vertices from~$P$ as we only add internal vertices of the paths to the solution in this phase.
	However for most of the paths in~$\mathcal{P}_{R}$ there are two possibilities which need to be considered.
	\begin{figure}[t]
    \centering
    \begin{tikzpicture}[xscale=0.66, yscale=0.44]

        \node[] () at (-5,7) {\textcolor{blue}{$l_P$}};
        \node[circle, fill=blue, inner sep=3.25pt] (l2) at (-5,6) {};
        \node[circle, fill=black, inner sep=2pt] (v21) at (-4,6) {};
        \node[circle, fill=black, inner sep=2pt] (v22) at (-3,6) {};
        \node[] () at (-2,5) {\textcolor{gray}{$x_P = 3$}};
        \node[circle, fill=gray, inner sep=2pt] (v23) at (-2,6) {};
        \node[circle, fill=black, inner sep=2pt] (v24) at (-1,6) {};
        \node[circle, fill=black, inner sep=2pt] (v25) at (0,6) {};
        \node[circle, fill=black, inner sep=2pt] (v26) at (1,6) {};
        \node[circle, fill=black, inner sep=2pt] (v27) at (2,6) {};
        \node[] () at (3,5) {\textcolor{gray}{$x_P + q + 1$}};
        \node[circle, fill=gray, inner sep=2pt] (v28) at (3,6) {};
        \node[] () at (4,7) {\textcolor{red}{$r_P$}};
        \node[circle, fill=red, inner sep=3.25pt] (r2) at (4,6) {};
        \node[] () at (-7,6) {$\mathcal{P}_{\ell \cdot (q+1)}$};
        \draw[thick, gray, dashed] (v22)--(v23)--(v24) (v27)--(v28)--(r2);
        \draw[very thick] (l2)--(v21)--(v22) (v24)--(v25)--(v26)--(v27);
        \node[] () at (7,6) {$n_P = 10$};

        \node[] () at (-4,3) {\textcolor{blue}{$l_P$}};
        \node[circle, fill=blue, inner sep=3.25pt] (l3) at (-4,2) {};
        \node[] () at (-3,1) {\textcolor{gray}{$x_P = 1$}};
        \node[circle, fill=gray, inner sep=2pt] (v31) at (-3,2) {};
        \node[circle, fill=black, inner sep=2pt] (v32) at (-2,2) {};
        \node[circle, fill=black, inner sep=2pt] (v33) at (-1,2) {};
        \node[circle, fill=black, inner sep=2pt] (v34) at (0,2) {};
        \node[circle, fill=black, inner sep=2pt] (v35) at (1,2) {};
        \node[] () at (2,1) {\textcolor{gray}{$x_P + q + 1$}};
        \node[circle, fill=gray, inner sep=2pt] (v36) at (2,2) {};
        \node[] () at (3,3) {\textcolor{red}{$r_P$}};
        \node[circle, fill=red, inner sep=3.25pt] (r3) at (3,2) {};
        \node[] () at (-7,2) {$\mathcal{P}_{\lceil R \rceil}$};
        \draw[thick, gray, dashed] (l3)--(v31)--(v32) (v35)--(v36)--(r3);
        \draw[very thick] (v32)--(v33)--(v34)--(v35);
        \node[] () at (7,2) {$n_P = 8$};

        \node[] () at (-4,-1) {\textcolor{blue}{$l_P$}};
        \node[circle, fill=blue, inner sep=3.25pt] (l4) at (-4,-2) {};
        \node[circle, fill=black, inner sep=2pt] (v41) at (-3,-2) {};
        \node[circle, fill=black, inner sep=2pt] (v42) at (-2,-2) {};
        \node[circle, fill=black, inner sep=2pt] (v43) at (-1,-2) {};
        \node[] () at (0,-3) {\textcolor{gray}{$x_P = 4$}};
        \node[circle, fill=gray, inner sep=2pt] (v44) at (0,-2) {};
        \node[circle, fill=black, inner sep=2pt] (v45) at (1,-2) {};
        \node[circle, fill=black, inner sep=2pt] (v46) at (2,-2) {};
        \node[] () at (3,-1) {\textcolor{red}{$r_P$}};
        \node[circle, fill=red, inner sep=3.25pt] (r4) at (3,-2) {};
        \node[] () at (-7,-2) {$\mathcal{P}_{\lfloor R \rfloor}$};
        \draw[thick, gray, dashed] (v43)--(v44)--(v45);
        \draw[very thick] (l4)--(v41)--(v42)--(v43) (v45)--(v46)--(r4);
        \node[] () at (7,-2) {$n_P = 8$};

    \end{tikzpicture}
    \caption{A sketch of the different paths that need to be considered for the case~$\taup = \nicefrac{q-1}{q} < 1$ in the proof of \Cref{thm:max_leaf_number}
    Here~$q = 4$.
    Gray vertices are part of a potential solution and the value of the variable~$x_P$ determines the first vertex which is added to the solution starting from the neighbor of~$l_P$.
    There are three cases of how many vertices are added to the component of~$r_P$; (i) $\mathcal{P}_{\ell \cdot (q+1)}$, if the number of vertices~$n_P$ is a multiple of~$q+1$, then~$q - 1 - x_P$ vertices are added; (ii) $\mathcal{P}_{\lceil R \rceil}$, we are in none of the previous cases and the solution contains~$\lceil \nicefrac{n_P}{q+1} \rceil$ vertices from the path and then~$(n_P \mod (q+1)) - x_P - 2$ vertices are added; (iii) $\mathcal{P}_{\lfloor R \rfloor}$, we are in none of the previous cases and the solution contains~$\lfloor \nicefrac{n_P}{q+1} \rfloor$ vertices from the path and then~$(n_P \mod (q+1)) - x_P - 1 + q$ vertices are added.}
    \label{fig:ml_paths}
\end{figure}
	This is sketched in parts of~\cref{fig:ml_paths}.
	For~$P \in \mathcal{P}_{R}$, the solution either contains~$\lceil \nicefrac{n_P}{q+1} \rceil$ or~$\lfloor \nicefrac{n_P}{q+1} \rfloor$ vertices.
	The problem of splitting~$\mathcal{P}_{R}$ into these two cases can be resolved by branching on a partition of~$\mathcal{P}_{R}$ into two sets~$\mathcal{P}_{\lceil R \rceil} $ and~$ \mathcal{P}_{\lfloor R \rfloor}$ where the solution contains~$\lceil \nicefrac{n_P}{q+1} \rceil$ vertices from each path in~$\mathcal{P}_{\lceil R \rceil} $ and~$\lfloor \nicefrac{n_P}{q+1} \rfloor$ vertices from each path in~$ \mathcal{P}_{\lfloor R \rfloor}$.
	One thing we need to be careful about is the fact that not all paths from~$\mathcal{P}_{R}$ can be part of both~$\mathcal{P}_{\lceil R \rceil} $ and~$ \mathcal{P}_{\lfloor R \rfloor}$.
	There are three exceptions.
	\begin{itemize}
	 \item If~$n_P < q+1$, then~$P$ needs to be contained in~$\mathcal{P}_{\lceil R \rceil} $ as in the current branch of selected paths~$\mathcal{P}$ the solution needs to delete at least one vertex from every path in~$\mathcal{P}$.
	 \item If~$n_P \mod (q+1) = 1$, then the case of deleting~$\lceil \nicefrac{n_P}{q+1} \rceil$ vertices from~$P$ is equivalent to the case where also~$l_P$ and~$r_P$ are deleted since we can add~$l_P$ and every~$q+1$-th vertex to the solution if the solution contains~$\lceil \nicefrac{n_P}{q+1} \rceil$ vertices from~$P$.
	 This is considered in a different branch where another set~$V'$ of~$V_{ \geq 3}$ is regarded as part of the solution.
	 \item If~$n_P \mod (q+1) = 2$, then the case of deleting~$\lceil \nicefrac{n_P}{q+1} \rceil$ vertices from~$P$ is equivalent to the case where also~$l_P$ or~$r_P$ is deleted by a similar reason as above.
	 This is again considered in a different branch for another subset~$V'$ of~$V_{ \geq 3}$.
	\end{itemize}

	After splitting the set~$\mathcal{P}_{R}$ into~$\mathcal{P}_{\lceil R \rceil} $ and~$ \mathcal{P}_{\lfloor R \rfloor}$ while taking the three mentioned exceptions into account, we know exactly how many vertices a potential solution contains, namely the size of~$X$ plus the sum of all vertices which need to be added for the different paths in~$\mathcal{P}$.
	If this number is greater than~$k$, then the current solution can be discarded.
	Otherwise it remains to check if the value of the variables~$x_P$ can be set in such a way that all resulting components have size at most~$q$.

	To formulate this as an ILP feasibility problem we need to distinguish between the different kinds of paths as the exact constraints depend on this.
	The component of~$l_P$ always receives~$x_P -  1$ vertices but the value range of~$x_P$ as well as the number of vertices which are added to the component of~$r_P$ depend on what kind of path we are considering.
	For an overview see~\cref{fig:ml_paths}.
	\begin{itemize}
	 \item For~$P \in \mathcal{P}_{\ell\cdot(q+1)}$ recall that in this step of the algorithm we only delete internal vertices from the paths.
	 Thus we have the constraint
	 \setcounter{equation}{0}
	 \begin{align}
		1 \leq x_P < q
	 \end{align}
	 as the case~$x_P = q$ is equivalent of adding~$r_P$ to the solution but this is considered in a different branch from earlier.
	 Here, the component of~$r_P$ receives~$q - 1 - x_P$ vertices.
	 This is important for the last constraint which we will state later.
	 \item For~$P \in \mathcal{P}_{\lceil R \rceil}$ recall that~$n_P \mod (q+1) \geq 2$ due to the above case and the exceptions mentioned before. We add the constraint
	 \begin{align}
		1 \leq x_P \leq (n_P \mod (q+1)) - 2
	 \end{align}
	 as for greater values of~$x_P$ a solution only needs to contain~$\lfloor \nicefrac{n_P}{q+1} \rfloor$ vertices from~$P$ contradicting that~$P \in \mathcal{P}_{\lceil R \rceil}$.
	 The component of~$r_P$ receives~$(n_P \mod (q+1)) - 2 - x_P$ vertices in this case.
	 \item For~$P \in \mathcal{P}_{\lfloor R \rfloor}$ we have the constraint
	 \begin{align}
		n_P \mod (q+1) \leq x_P \leq q
	 \end{align}
	 as for smaller values of~$x_P$ a solution either is equivalent to adding~$r_P$ to a solution which is considered in a different branch from earlier or needs to contain at least~$\lceil \nicefrac{n_P}{q+1} \rceil$ vertices from~$P$ which contradicts~$P \in \mathcal{P}_{\lfloor R \rfloor}$.
	 In this case~$(n_P \mod (q+1)) - x_P - 1 + q$ vertices are added to the component of~$r_P$.
	\end{itemize}

	It remains to formulate a constraint for each component~$C \in \mathcal{C_\mathcal{P}}$ which ensures that the component is not becoming too large when adding the respective vertices of the paths from~$\mathcal{P}$ to the solution.
	For every component~$C \in \mathcal{C_\mathcal{P}}$ let~$c_C$ be the sum of all counters from the component, that is, $c_C = \sum_{v \in V(C)} c_v$.
	Intuitively, $c_C$ is the actual size of the component~$C$ if already removed vertices are taken into account.
	Thus, no more than~$q - c_C$ vertices from the paths in~$\mathcal{P}$ should be added to the component~$C$.
	This is captured by the following constraint.
	\begin{align}
		& \sum_{\substack{P \in\mathcal{P} \\ l_P \in V(C)}} (x_P - 1)
		+ \sum_{\substack{P \in \mathcal{P}_{\ell\cdot(q+1)} \\ r_P \in V(C)}} (q - 1 - x_P) \notag \\
		& + \sum_{\substack{P \in \mathcal{P}_{\lceil R \rceil} \\ r_P \in V(C)}} ((n_P \mod (q+1)) - x_P - 2) + \sum_{\substack{P \in \mathcal{P}_{\lfloor R \rfloor} \\ r_P \in V(C)}} ((n_P \mod (q+1)) - x_P - 1 + q) \notag \\ & \leq q - c_C
	\end{align}
	If the ILP containing all the constraints from above is feasible, then we know that we can add vertices from the paths in~$\mathcal{P}$ to~$X$ in such a way that no component of~$G-X$ has size more than~$q$.
	If we reach this point, then we can return yes.

	The ILP has~$O(\ml^2)$ many variables.
	Thus, the it can be solved in~$\ml^{O(\ml^2)} \cdot n^{O(1)}$ time \cite{FrankT87, Kannan87, Lenstra83} and therefore the overall running time of the algorithm is bounded by~$\ml^{O(\ml^2)} \cdot n^{O(1)}$.
\end{proof}
}

We remark that the case of $\taup > 1$ from the algorithm above actually shows fixed-parameter tractability for the smaller parameter feedback edge number.
After removing all degree-one vertices, the number of vertices with degree at least three is bounded by a function of the feedback edge number.
In contrast to this, the case of~$\taup < 1$ is unclear regarding the feedback edge number.
It seems plausible that \BDVD is~$\W[1]$-hard for this parameter since the same is true for the edge deletion variant~\cite{BNV25}.

\section{Hardness Results}
\label{sec:hardness}
Here, we contrast our algorithmic findings with hardness results.
In particular, we provide two reductions that show W[1]-hardness of \bdvd{} with respect to treedepth combined with further parameters.
This implies hardness for smaller parameters such as treewidth and cliquewidth for which we previously established fixed-parameter tractability when combined with~$p$ and~$q$, the ``encoding size'' of~$\taup = \nicefrac{p}{q}$.

\subsection
{Treedepth, Distance to Cograph, and Distance to Interval Graph}
In the previous section we established that \bdvd{} is fixed-parameter tractable with respect to vertex integrity.
We contrast this result by showing that \bdvd is~$\W[1]$-hard parameterized by the treedepth.
The reduction even gives the stronger result that \bdvd is $\W[1]$-hard for the combined parameter treedepth + distance to cograph + distance to interval graph.
The \emph{treedepth} of a graph $G$ is the minimum height of a rooted forest $F$ such that
$G \subseteq \clos(F)$, where $\clos(F)$ denotes the graph with vertex set $V(F)$ and edge set~$
 \bigl\{\{u,v\} \mid \text{$u$ is an ancestor of $v$ in $F$ and $u \neq v$}\bigr\}$~\cite{JM14}.
A graph~$G$ is a \emph{cograph} if it does not contain an induced~$P_4$~\cite{BLS99}.
A graph is called chordal if every induced subgraph that is a cycle has exactly three vertices.
A graph $G$ is called an \emph{interval graph} if it is chordal and, for every triple of distinct vertices $v_1, v_2, v_3 \in V(G)$ such that no two of them are adjacent, at least one vertex $v_i$ is adjacent to every path connecting the other two vertices~\cite{LEK62}.

To establish $\W[1]$-hardness for the parameters mentioned above we provide a reduction from \rubp{}, which is defined as follows \cite{HANA24}.

\problemDef{\rubp}
{
	A set $S = \{s_1, \dots, s_n\}$ of integers in unary encoding, $k \in \ZZ^+$, as well as a
	function~$f : S \to \binom{[k]}2 \ceq \{\{x,y\} \mid x,y \in \{1,\dots,k\}, x \neq y\}$, that maps every $s_i \in S$ to a set of exactly
	two integers from $[k]$.
}
{
	Determine whether we can partition $S$ into $k$ subsets $S_1, \dots, S_k$
	such that for all
	$i \in [k]$ it holds that
	\begin{itemize}
		\item[(i)] $\sum_{s \in S_i} s = \frac 1 k \sum_{s \in S} s$
		\item[(ii)] For all $s \in S_i$, $i \in f(s)$.
	\end{itemize}
}

\begin{theorem}
	\label{thm:whard_td_co_interval}
    \BDVD{} is $\W[1]$-hard with respect to the combined parameter treedepth + distance to cograph + distance to interval graph.
\end{theorem}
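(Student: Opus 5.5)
We give a parameterized reduction from \rubp{}, which is $\W[1]$-hard parameterized by~$k$ even when the integers are encoded in unary~\cite{HANA24}. Let $(S,k,f)$ be an instance with $B \ceq \frac1k\sum_{s\in S}s$. We build a graph with $k$ \emph{bin vertices} $b_1,\dots,b_k$ and, for every item $s_i$, an \emph{item gadget} attached only to $b_x$ and $b_y$ for $f(s_i)=\{x,y\}$. The budget is chosen so that the solution is forced, inside each item gadget, to make a binary choice that effectively decides which of the two bins the item is assigned to. We never delete bin vertices; they absorb the load. Concretely, the plan for the gadget of $s_i$ is as follows. It has two ``connector'' vertices $u_i^x,u_i^y$ adjacent to $b_x$ and $b_y$ respectively, each with $s_i$ (scaled by a factor $M$) parallel paths realized by $M s_i$ private degree-two vertices or a small biclique. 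Deleting exactly one of the two connectors costs one deletion per item, and the budget is $|S|$. The surviving side's edges must then be oriented into the bin, contributing load proportional to $s_i$. We fix the target density $\taup$ as a rational with large denominator so that each bin vertex can take exactly $M B$ units plus a fixed base load. The base load comes from its fixed neighbourhood of pendant-like structures that force fractional indegree. Tightness of the global sum, namely total load $= k\cdot MB$, then forces every bin to be exactly full. This gives condition (i), and adjacency gives condition (ii).

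For correctness I would argue via \cref{lem:charikar}. Given a partition, delete the connector towards the non-chosen bin. Then exhibit an explicit fractional orientation in which all gadget-internal vertices have indegree $\le\taup$ and each $b_j$ receives exactly its capacity. Conversely, for a solution of size $\le |S|$ I would first argue by exchange that, without loss of generality, exactly one connector per item is deleted. Deleting a bin vertex or an internal vertex is never better, since a single item gadget with both connectors present is already too dense, by a direct density count. Then I would take the densest-subgraph viewpoint: the subgraph consisting of all of $b_j$'s surviving gadget pieces would exceed density $\taup$ if $b_j$ were overloaded. Summing over all $j$ yields equality everywhere.

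For the parameter bounds, let $Y=\{b_1,\dots,b_k\}$. Removing $Y$ leaves a disjoint union of item gadgets. Each gadget is a star-like graph of constant depth (connector plus leaves or short pieces), which is simultaneously a cograph and an interval graph if built as stars or cliques with pendant structure. Therefore the distance to cograph and to interval graphs is at most $k$, and the treedepth is at most $k+O(1)$. Hence all three parameters are bounded by $O(k)$. The construction is polynomial because the integers are unary.

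The main obstacle is the gadget calibration. Internal gadget vertices must have spare capacity so that they never need deletion, while still being unable to absorb the item's load themselves, and $\taup$ must be chosen so that the only dense subgraphs are the bin-centered ones. I would first try a star gadget: connector $u$ joined to $b_x$ via $Ms_i$ private paths of length two (vertex $z$ adjacent to $u$ and $b_x$). I would then set $\taup$ just below the threshold where each $z$ can keep slightly less than one unit, forcing a residual $\epsilon$ per path into $b_x$. I would tune $\epsilon$ so that its product with $MB$ matches the bin capacity, and check the density count of a bin together with its attached paths.
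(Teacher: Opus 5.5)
Your skeleton matches the paper's: a reduction from \rubp{} with budget $|S|$, bin vertices that are never deleted, one deletion per item gadget selecting the bin, the density of a bin with its surviving pieces bounding its load, and a global count forcing every bin to be exactly full. The gap is that the gadget and $\taup$ are left open, and the instantiation you sketch fails in both regimes. Suppose $\taup$ is just below $1$, so that each $z$ keeps less than one unit. Then $G-V'$ must be a forest, yet the surviving connector $u$, its bin $b_x$ and any two of the $z$'s span a $C_4$. In orientation terms, every $z$ pushes more than one unit onto $\{u,b_x\}$, while $u$ absorbs less than one unit in total, so $b_x$ receives more than $Ms_i-1$ units rather than $\epsilon M s_i$. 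You would therefore need $\taup$ close to $2$, since a bin together with one surviving side has $Ms_i+2$ vertices and at least $2Ms_i$ edges. But then, as described, $u_i^x$ and $u_i^y$ are not linked, so each gadget minus the bins is a forest of density below $1$. Every violating subgraph then passes through a bin, and deleting all $k$ bin vertices is a solution whenever $|S|\ge k$, so the reduction outputs only yes-instances. Your exchange step (``a single item gadget with both connectors present is already too dense'') presupposes an intra-gadget dense core that is not there. Moreover, the load reaching $b_x$ would be $\epsilon Ms_i$ minus the spare capacity of $u$. That is affine rather than proportional in $s_i$, so exact tightness no longer yields $\sum_{s\in S_j}s=B$. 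Finally, any design in which each connector carries private pieces and the connectors are joined contains an induced $P_4$ of the form $z^x u^x u^y z^y$ in every gadget. Then the distance to cograph is not bounded by $k$; a clique with pendant vertices is not a cograph in general.

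The missing idea is to take $\taup<1$, so that surviving pieces must be trees and the numbers are encoded by component sizes. With $\taup=1-\frac{1}{3r+1}$ and $r=\frac1k\sum_{s\in S}s$, a graph satisfies $\rho^*\le\taup$ if and only if it is a forest whose components have at most $3r+1$ vertices, so no orientation calibration is needed. For $f(s_i)=\{p,q\}$ the paper takes two \emph{adjacent} connectors $x^i_p,x^i_q$, with $x^i_p$ adjacent to $v_q$ and $x^i_q$ adjacent to $v_p$. It adds a set $V_i$ of $3s_i-1$ \emph{common} neighbours of the two connectors. Every gadget contains a triangle, so the budget $|S|$ forces exactly one deletion inside each gadget and none at the bins. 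Since $|V_i|\ge 2$, deleting a vertex of $V_i$ leaves a triangle, so a connector must be deleted. The other connector together with $V_i$ then forms a star on exactly $3s_i$ vertices hanging from a single bin. Hence $G-V'$ has $k$ components, one per bin, on $k(3r+1)$ vertices in total, each with at most $3r+1$ vertices. So each has exactly $3r+1$ vertices, i.e.\ $\sum_{s\in S_p}s=r$. In your load language, each star has excess $3s_i(1-\taup)=\frac{3s_i}{3r+1}$ and each bin has capacity $\taup=\frac{3r}{3r+1}$: exactly proportional, with no base load and no scaling. Because the vertices of $V_i$ are twins, each gadget is an edge joined to an independent set, hence both a cograph and an interval graph, and the treedepth is at most $k+3$.
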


\begin{proof}
	We provide a reduction from \rubp which is known to be $\W[1]$-hard parameterized
	by the number of bins $k$ \cite{HANA24}.
	Let $(S, k, f)$ be an instance of \rubp{}.
	We construct an instance $(G, k', \taup)$ of \bdvd{} as follows.

\begin{enumerate}
	\item For every $p \in [k]$, create a corresponding vertex $v_p$ in $G$.
	\item For each $s_i \in S$ with $f(s_i) = \{p, q\}$, introduce two vertices $x^i_{p}$
		and~$x^i_{q}$ in~$G$, making~$x^i_p$ adjacent to~$v_q$ and~$x^i_q$ adjacent to~$v_p$.
		We also make~$x_p^i$ adjacent to~$x_q^i$.
		Then add a set~$V_i = \{y^i_1, \dots, y^i_{3s_i-1}\}$
		of $3s_i - 1$ vertices to~$G$, and make each of~$x^i_p$ and~$x^i_q$ adjacent
		to every vertex in~$V_i$. We call the induced subgraph
		$G[\{x^i_p\} \cup \{x^i_q\} \cup V_i]$ a
		choice gadget for $s_i$. An example can be seen in Figure \ref{fig:choicegadget}.
	\item Set $k' = |S|$ and $\taup = 1 - \nicefrac{1}{3r+1}$, where
		$r = \sum_{s_i \in S} s_i / k$.
\end{enumerate}

\begin{figure*}[t]
\centering
	\begin{tikzpicture}[scale=1,dot/.style={draw,circle,inner sep=6pt}]
		\node[dot,label={$v_p$}] at (0,0) (u) {};
		\node[dot,label={$v_q$}] at (6, 0) (v) {};

		\node[dot,label={$x^i_q$}] at (1, 0) (xu) {};
		\node[dot,label={$x^i_p$}] at (5, 0) (xv) {};
		\node[dot] at (3, -0.5) (y1) {};
		\node[dot] at (3, 0.5) (y2) {};
		\node at (3.7,1.2) (label) {$V_i$};

		\draw (xu) -- (u);
		\draw (xv) -- (v);
		\draw (xv) edge [bend left=90] (xu);

		\draw 	(xu) -- (y1)
			(xu) -- (y2);

		\draw	(xv) -- (y1)
			(xv) -- (y2);

		\draw [rounded corners, dotted, line width=0.3mm] (2.5, 1) rectangle (3.5, -1);
	\end{tikzpicture}
	\caption{Choice gadget for item~$s_1=1$ with~$f(s_i) = \{p,q\}$ together with vertices~$v_p$ and~$v_q$ representing bins~$p$ and~$q$ in the reduction from \cref{thm:whard_td_co_interval}.
	\label{fig:choicegadget}
	}
\end{figure*}

Clearly, $(G, k', \taup)$ can be constructed in polynomial time in the size of $(S, k, f)$ when
all integers in $S$ are encoded in unary. Moreover, we show that all parameters mentioned in the theorem are bounded by a function of $k$.

For the treedepth we construct a rooted forest $F$ as follows. First, we add the path $(v_1,\dots,v_k)$ to~$F$ and
root~$F$ at~$v_1$.
Then, for each $s_i \in S$ with $f(s_i)=\{p,q\}$, we connect~$x_p^i$ to~$x_q^i$ and make~$x_q^i$
adjacent to every vertex $y^i \in V_i$.
Finally, we connect~$x_p^i$ to~$v_k$.
The resulting rooted forest has height~$k+3$ and satisfies $G \subseteq \clos(F)$.
Hence,~$G$ has treedepth at most~$k+3$.

By construction it is clear that no choice gadget contains an induced~$P_4$ and therefore is a cograph.
Moreover, any induced cycle in a choice gadget consists of exactly three
vertices, namely one vertex from $V_i$ together with $x_u^i$ and $x_v^i$.
Hence, each choice gadget is chordal.
Since both $x_u^i$ and $x_v^i$ are adjacent to all other vertices of the gadget, and every edge is incident to either $x_u^i$ or $x_v^i$, it follows that each choice gadget is an interval graph.
Since the removal of the~$k$ vertices corresponding to the bins leaves a graph where each component is a choice gadget, it follows that the distance to cograph/interval graph is at most~$k$.

\smallskip

Next, we prove the correctness of the reduction, that is, we show that $(S, k, f)$
is a yes-instance of \rubp{} if and only if $(G, k', \taup)$ is a yes-instance of \BDVD{}.

\medskip

``$\Rightarrow$'': Suppose there exists a solution $S_1, \dots, S_k$ of the \rubp-instance~$(S,k,f)$. We define a solution for the \bdvd-instance by
\begin{align*}
	V' = \bigcup_{p \in [k]} \bigcup_{s_i \in S_p} \{x^i_p\}.
\end{align*}

Clearly, $|V'| = |S| = k'$ as~$V'$ contains one vertex for each item in~$S$.
It suffices to show that $G - V'$ contains no subgraph with density exceeding
$\taup$. Since $G - V'$ is a forest, it is enough to verify that every connected component of
$G - V'$ has density at most $\taup$.

Observe that each component of $G - V'$ contains exactly one vertex from
$\{v_1, \dots, v_k\}$, each corresponding to a bin. Let $v_p$ be such a vertex.
For any $s_i \in S$ with $f(s_i) = \{p, q\}$ and $s_i \in S_q$ with $p \ne q$,
the vertex $x^i_q$ adjacent to $v_p$ has been deleted.
Hence, we only need to consider the $s_i \in S_p$.
Let $G_p \subseteq G - V'$ be the connected component containing~$v_p$. Its vertex set is~$V(G_p) = \{v_p\} \cup \bigcup_{s_i \in S_p} (\{x^i_q\} \cup V_i)$ and therefore~$|V(G_p)| = 1 + \sum_{s_i \in S_p} 3s_i = 1 + 3r.$
Since $G_p$ is a tree, we have $|E(G_p)| = |V(G_p)| - 1 = 3r$, and thus
\[
	\rho(G_p) = \frac{|E(G_p)|}{|V(G_p)|}
	= \frac{3r}{3r+1}
	= \taup.
\]
This completes the forward direction.

\medskip

``$\Leftarrow$'': Let~$V'$ be a solution to $(G, k', \taup)$. We first claim that~$V'$ must contain exactly one vertex from each choice gadget in~$G$.

Indeed, each choice gadget contains a cycle, so its density is at least~$1$. Since $\taup < 1$,
at least one vertex from every choice gadget must be deleted. There are~$k'$ pairwise disjoint
choice gadgets and therefore exactly one vertex is deleted from each gadget, and no vertices are deleted
from the remaining graph.

Next, we claim that the deleted vertex in any gadget must be either~$x^i_p$ or~$x^i_q$.
Since $s_i \ge 1$, we have $|V_i| \ge 2$. Moreover, any vertex in~$V_i$ forms a
cycle together with~$x^i_p$ and~$x^i_q$. Hence, deleting a single vertex from~$V_i$ does
not destroy all cycles in the gadget, so the deleted vertex must be one of~$x^i_p$ or~$x^i_q$.

This allows us to define a partition $S_1, \dots, S_k$: for each~$s_i$, we add~$s_i$ to~$S_p$ if~$x^i_p$ is contained in~$V'$.

Now observe that~$G-V'$ has exactly~$k$ components and since~$V'$ is a solution we know that each component has size at most~$3r+1$.
By construction~$G-V'$ has exactly~$k(3r+1)$ vertices and subsequently we know that each component in~$G-V'$ has size exactly~$3r+1$.

Let~$G_p$ be such a component containing the vertex~$v_p$.
Then~$|V(G_p)| = 3r+1$ which implies~$\sum_{s \in S_p} (|V_s| + 1) = 3r$ and therefore~$\sum_{s \in S_p} s = r$.

Hence, $S_1, \dots, S_k$ satisfies requirement (i), and (ii) is enforced by construction, completing the backward direction.
\end{proof}

\subsection{Treedepth and Feedback Vertex Number}
The feedback vertex number of a graph~$G$ is the minimum cardinality of any set~$X \subseteq V(G)$ such that~$G-X$ is acyclic~\cite{PARAMALG}.
It is a smaller parameter than the vertex cover number and the max leaf number, for which \bdvd is in~$\FPT$~(\cite{BNV25} and \cref{thm:max_leaf_number}).
Note that in the proof of \cref{thm:whard_td_co_interval} we actually constructed a feedback vertex set, but of size not bounded by the parameter.
Here, we establish that \bdvd is $\W[1]$-hard for the combined parameter feedback vertex number + treedepth.

\begin{theorem}
	\label{thm:whard_td_fvn}
    	\BDVD{} is $\W[1]$-hard with respect to the combined parameter
	treedepth + feedback vertex number.
\end{theorem}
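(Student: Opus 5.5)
We will again reduce from \rubp{} parameterized by the number of bins~$k$, reusing the bin vertices $v_1,\dots,v_k$ and the density target $\taup = 1-\nicefrac{1}{3r+1}$ (more generally $1-\nicefrac{1}{cr+1}$ for a suitable constant $c$). The point is to replace the choice gadgets of \cref{thm:whard_td_co_interval}, which each contain a cycle, by acyclic gadgets. In the new construction all cycles of $G$ must pass through the bin vertices, so that $\{v_1,\dots,v_k\}$ is a feedback vertex set of size~$k$.

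For an item $s_i$ with $f(s_i)=\{p,q\}$ we take a path $P_i$ whose two endpoints are adjacent to $v_p$ and $v_q$, respectively. Its length is chosen in terms of $s_i$ (roughly $3s_i$ internal vertices plus two connector vertices $x^i_p,x^i_q$). Then $P_i$ together with $v_p,v_q$ and any other path between $v_p$ and $v_q$ closes a cycle. We set $k'=|S|$. With $\taup<1$, every component of $G-V'$ must be a tree with at most $3r+1$ vertices. Deleting exactly one connector of $P_i$ attaches the remainder of $P_i$ (of size $3s_i$) to exactly one bin vertex, namely $v_p$ if $x^i_q$ is deleted.

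To enforce that exactly one vertex per gadget is deleted, we need cycles confined to each gadget. We obtain these via the bins: we add, for every item, a second parallel structure. Concretely, we join $x^i_p$ and $x^i_q$ both to a common private vertex $z_i$ that is itself adjacent to a bin vertex $v_p$. Alternatively, we make two parallel paths between $v_p$ and $v_q$ sharing the connectors. In this way the gadget together with the bin vertices contains a cycle that avoids all other gadgets.

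Treedepth is bounded as before: we root a path on $v_1,\dots,v_k$. After removing the bins, each gadget is a path of length $\Theta(s_i)$, however, and this has unbounded treedepth. We fix this by replacing long paths with stars. The $3s_i-1$ bulk vertices become leaves attached to a single connector, as in the choice gadget, but without the second connector being adjacent to them. This keeps each gadget a tree of depth at most $3$ after deleting the bins, so the treedepth is at most $k+3$ and the feedback vertex number is at most $k$.

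Correctness then follows the earlier template. In the forward direction, we delete for each $s_i\in S_p$ the connector toward the other bin. The component of $v_p$ is then a tree on exactly $1+3r$ vertices, with density exactly $\taup$. In the backward direction, counting forces every component to have size exactly $3r+1$, which yields equal bin sums.

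The main obstacle is designing the gadget so that three requirements hold simultaneously. First, every cycle uses bin vertices. Second, each gadget must force at least one deletion, which is needed for the budget counting. Third, the only sensible single deletions must be connector vertices. The risk is that cycles now span several gadgets via the bins, so one deletion might serve many gadgets, for instance deleting a bin vertex itself. I would try to counter this first by duplicating each bin vertex into $k'+1$ twins, each adjacent to the same gadget endpoints. Deleting bin twins is then never worthwhile, while the feedback vertex set would then grow. Hence the twins should instead be attached through a single hub adjacent to the bin. This balancing is where I expect most of the work: making the forced-deletion argument go through with component sizes, rather than cycle-hitting alone, deriving the contradiction.
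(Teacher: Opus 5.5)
Your proposal has a genuine gap. It never produces a gadget that meets the three requirements you list, and the constructions you sketch fail at exactly the point you flag. Suppose $\{v_1,\dots,v_k\}$ is a feedback vertex set and every gadget meets the rest of the graph only through bin vertices. Then deleting all $k$ bin vertices costs $k \le |S| = k'$ in every nontrivial instance, since each bin needs at least one item. It destroys every cycle and leaves single gadget trees on roughly $3s_i$ vertices, which fit within $3r+1$ for all items with $s_i<r$. So the constructed instance is a yes-instance regardless of the \rubp{} instance.

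None of your proposed fixes help:
\begin{itemize}
\item The private vertex $z_i$ only creates cycles through $v_p$, which one deletion of $v_p$ hits for all such gadgets at once.
\item Twin copies of a bin create cycles that avoid the original bin set: two twins and two connectors already form a $C_4$. So the feedback vertex number is no longer bounded.
\item Routing the twins through a single hub just recreates one cheap vertex whose deletion separates everything.
\end{itemize}

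Your star modification also breaks the encoding itself. If the $3s_i-1$ leaves hang on only one connector, deleting that connector isolates them, and only one vertex reaches the other bin. The two choices then no longer contribute $3s_i$ to the chosen bin, and the counting in both directions collapses.

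The missing idea is that deletions must be forced by component size, not by cycles. The bounded parameters are also far easier to inherit from the source problem than to engineer from the bins. The paper reduces from \bdd{}, which is already $\W[1]$-hard for treedepth + feedback vertex number:
\begin{itemize}
\item Each edge $uv$ becomes a path $u, u_{uv}, y_{uv}, v_{uv}, v$ with $d-1$ pendant leaves on $y_{uv}$.
\item The target is $\taup = \nicefrac{d}{d+1}$, so components must be trees on at most $d+1$ vertices, and the budget is $k' = k + |E(G)|$.
\item Every edge gadget has $d+2$ vertices and therefore absorbs one deletion by size alone.
\item Deleting $y_{uv}$ leaves $u_{uv}$ as a pendant of $u$, so the component of a surviving $u$ has $1+\deg_{G-S}(u)$ vertices. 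This is how the degree bound is encoded.
\end{itemize}
Cycles of $G'$ correspond to cycles of $G$, and the gadgets add only constant depth, so treedepth and feedback vertex number carry over from the source instance. Correctness is then Hanaka et al.'s argument for \coc{}.
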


{

\begin{proof}
	We construct a similar reduction to Hanaka et al.~\cite{HANA24} and reduce from
	\bdd{} (known to be $\W[1]$-hard parameterized by treedepth + feedback vertex number \cite{BDD}) to
	\BDVD{}.

\problemDef{\bdd{}}
{An undirected Graph $G$, a degree bound $d$ and a solution size $k$.}
{
	Is there a subset $S \subseteq V(G)$ with $|S| \le k$ such that $G-S$
	contains no vertex with a degree of more than $d$?
}

	Let $(G, k, d)$ be an instance of \bdd{}. We construct an instance $(G', k', \taup)$
	of \BDVD{} as follows.
	\begin{itemize}
		\item[i.] Add a copy of each vertex $v \in V(G)$ to $G'$.
		\item[ii.] For every edge $\{u, v\} \in E(G)$ we add three vertices
			$u_{uv}, v_{uv}$ and~$y_{uv}$ to $G'$ and connect~$y_{uv}$ to~$u_{uv}$
			and~$v_{uv}$. We also connect~$u$ to~$u_{uv}$ and~$v$ to~$v_{uv}$.

			Additionally we add a set~$V_{uv}$ of $d-1$ vertices to~$G'$ and
			connect each vertex to~$y_{uv}$. An example can be seen in
			\cref{fig:edge-gadget}.
		\item[iii.] At last we set $k' = k + |E(G)|$ and $\taup = \nicefrac{d}{d+1}$.
	\end{itemize}
	\begin{figure*}[t]
	\centering
	\begin{tikzpicture}[scale=1,dot/.style={draw,circle,inner sep=6pt}]
		\node[dot,label={$v$}] at (-1,0) (u) {};
		\node[dot,label={$u$}] at (7, 0) (v) {};

		\node[dot,label={$v_{uv}$}] at (0, 0) (uuv) {};
		\node[dot,label={$u_{uv}$}] at (6, 0) (vuv) {};
		\node[dot,label={$y_{uv}$}] at (3, 0) (yuv) {};
		\node[dot] at (1, -1) (v1) {};
		\node[dot] at (2, -1) (v2) {};
		\node[dot] at (3, -1) (v3) {};
		\node[dot] at (4, -1) (v4) {};
		\node[dot] at (5, -1) (v5) {};
		\node at (6,-1) (label) {$V_{uv}$};

		\draw (uuv) -- (u);
		\draw (vuv) -- (v);
		\draw (uuv) -- (yuv);
		\draw (vuv) -- (yuv);

		\draw
			(yuv) -- (v1)
			(yuv) -- (v2)
			(yuv) -- (v3)
			(yuv) -- (v4)
			(yuv) -- (v5);

		\draw [rounded corners, dotted, line width=0.3mm] (5.5, -0.5) rectangle (0.5, -1.5);
	\end{tikzpicture}
	\caption{The corresponding gadget from $G'$ for a single edge $\{u, v\}$ in $G$ (and $d = 6$) from the proof of~\cref{thm:whard_td_fvn}.
	\label{fig:edge-gadget}
	}
	\end{figure*}
	Hanaka et al.~\cite{HANA24} use the same construction of~$G'$, reducing \bdd{} to \coc{}.
	As they show, for every edge gadget, at least one of the vertices
	$u_{uv}, v_{uv}$, or $y_{uv}$ must be included in any solution~$S$, as
	otherwise the induced subgraph~$G[\{u_{uv}, v_{uv}, y_{uv}\} \cup V_{uv}]$
	would be too large with respect to the \coc{} constraint.
	Consequently, for any feasible solution~$S$, the resulting graph $G' - S$
	must necessarily be a forest.

	Since \BDVD{} with $\taup < 1$ merely restricts the component size
	of the remaining forest, the correctness of the reduction follows directly from the arguments
	of Hanaka et al.~\cite{HANA24}.
\end{proof}
}

\section{Conclusion}
We filled several gaps in the parameterized complexity landscape of \BDVD, including the resolution of an open question of Bazgan et al.~\cite{BNV25} about the fixed-parameter tractability with respect to treewidth, see \Cref{fig:results} for an overview.
A common theme in our results is that restricting the target density~$\taup$ makes a significant difference in the complexity of \bdvd.
Without restricting the target density, when considering parameters related to (very) sparse graph such as feedback vertex number, feedback edge number, or max leaf number, the challenging case seems to be~$\taup < 1$.
Then \bdvd behaves like a balanced partitioning problem, which is hard even in very sparse graphs~\cite{BNV25,EFGK2009}. 

We leave open the question of the parameterized complexity of \bdvd for parameters between the ($\W[1]$-hard) distance to cograph and distance to interval graph and the (fixed-parameter tractable) vertex cover number, e.g. distance to cluster and twin-cover number.
Another open question posed by Bazgan et al.~\cite{BNV25} remains whether \bdvd{} is hard for the combined
parameter treewidth and solution size.
Since the sister problem \bded{} is known to be $\W[1]$-hard when parameterized by treewidth and solution size, it seems plausible that \bdvd{} exhibits similar hardness.
Similarly the parameterized complexity of \bdvd for the feedback edge number (for which \bded is $\W[1]$-hard) is an open question.

\newpage
\bibliographystyle{plainurl} %
\bibliography{References.bib}

@article{BNV25,
	title = {Destroying densest subgraphs is hard},
	journal = {Journal of Computer and System Sciences},
	volume = {151},
	year = {2025},
	issn_IGNORE = {0022-0000},
	url = {https://doi.org/10.1016/j.jcss.2025.103635},
	author = {Cristina Bazgan and André Nichterlein and Sofia {Vazquez Alferez}},
}

@inproceedings{CHAR00,
  author       = {Moses Charikar},
  title        = {Greedy approximation algorithms for finding dense components in a
                  graph},
  booktitle    = {Proceedings of the Third International Workshop on Approximation Algorithms for Combinatorial Optimization ({APPROX} 2000)},
  address_IGNORE = {Saarbrücken, Germany},
  series       = {LNCS},
  volume_IGNORE       = {1913},
  publisher    = {Springer},
  year         = {2000},
  url          = {https://doi.org/10.1007/3-540-44436-X_10},
  doi_IGNORE          = {10.1007/3-540-44436-X_10},
}

@book{PARAMALG,
	author = {Cygan, Marek and Fomin, Fedor V. and Kowalik, Lukasz and Lokshtanov, Daniel and Marx, Daniel and Pilipczuk, Marcin and Pilipczuk, Michal and Saurabh, Saket},
	title = {Parameterized Algorithms},
	year = {2015},
	isbn = {3319212745},
	publisher = {Springer},
	edition = {1st},
}

@book{DF13,
  author       = {Rodney G. Downey and
                  Michael R. Fellows},
  title        = {Fundamentals of Parameterized Complexity},
  publisher    = {Springer},
  year         = {2013},
  url          = {https://doi.org/10.1007/978-1-4471-5559-1},
  ISBN         = {978-1-4471-7164-5}
}

@InProceedings{HANA24,
  author =	{Hanaka, Tesshu and Lampis, Michael and Vasilakis, Manolis and Yoshiwatari, Kanae},
  title =	{{Parameterized Vertex Integrity Revisited}},
  booktitle =	{Proceedings of the 49th International Symposium on Mathematical Foundations of Computer Science (MFCS 2024)},
  series =	{Leibniz International Proceedings in Informatics (LIPIcs)},
  ISBN_IGNORE =	{978-3-95977-335-5},
  ISSN_IGNORE =	{1868-8969},
  year =	{2024},
  volume_IGNORE =	{306},
  publisher =	{Schloss Dagstuhl -- Leibniz-Zentrum für Informatik},
  address_IGNORE =	{Dagstuhl, Germany},
  URL =		{https://doi.org/10.4230/LIPIcs.MFCS.2024.58 },
  URN =		{urn:nbn:de:0030-drops-206141},
  doi_IGNORE =		{10.4230/LIPIcs.MFCS.2024.58}
}

@article{LANC24,
	author = {Lanciano, Tommaso and Miyauchi, Atsushi and Fazzone, Adriano and Bonchi, Francesco},
	title = {A Survey on the Densest Subgraph Problem and its Variants},
	year = {2024},
	issue_date = {August 2024},
	publisher = {Association for Computing Machinery},
	address_IGNORE = {New York, NY, USA},
	volume = {56},
	number = {8},
	issn_IGNORE = {0360-0300},
	url = {https://doi.org/10.1145/3653298},
	doi_IGNORE = {10.1145/3653298},
	journal = {ACM Computing Surveys},
	month = apr,
	articleno = {208}
}

@InProceedings{CHAN25,
  author =	{Chandrasekaran, Karthekeyan and Chekuri, Chandra and Kulkarni, Shubhang},
  title =	{{On Deleting Vertices to Reduce Density in Graphs and Supermodular Functions}},
  booktitle =	{Proceedings of the 52nd International Colloquium on Automata, Languages, and Programming (ICALP 2025)},
  series =	{Leibniz International Proceedings in Informatics (LIPIcs)},
  ISBN_IGNORE =	{978-3-95977-372-0},
  ISSN_IGNORE =	{1868-8969},
  year =	{2025},
  volume_IGNORE =	{334},
  publisher =	{Schloss Dagstuhl -- Leibniz-Zentrum für Informatik},
  address_IGNORE =	{Dagstuhl, Germany},
  URL =		{https://drops.dagstuhl.de/entities/document/10.4230/LIPIcs.ICALP.2025.43},
  URN =		{urn:nbn:de:0030-drops-234200},
  doi_IGNORE =		{10.4230/LIPIcs.ICALP.2025.43}
}

@book{JM14,
  title={Sparsity: graphs, structures, and algorithms},
  author={Nesetril, Jaroslav and De Mendez, Patrice Ossona},
  year={2014},
  publisher={Springer},
  url={https://doi.org/10.1007/978-3-642-27875-4},
}

@article{LEK62,
	author = {Lekkeikerker, Cornelis and Boland, Johan},
	journal = {Fundamenta Mathematicae},
	number = {1},
	title = {Representation of a finite graph by a set of intervals on the real line},
	url = {http://eudml.org/doc/213681},
	volume = {51},
	year = {1962}
}

@inproceedings{NICH26,
  author       = {Matthias Bentert and
                  Tom{-}Lukas Breitkopf and
                  Vincent Froese and
                  Anton Herrmann and
                  Andr{\'{e}} Nichterlein},
  title        = {Density Matters: {A} Complexity Dichotomy of Deleting Edges to Bound
                  Subgraph Density},
  booktitle    = {Proceedings of the 43rd International Symposium on Theoretical Aspects of Computer Science ({STACS} 2026)},
  address_IGNORE      = {Grenoble, France},
  series       = {LIPIcs},
  volume_IGNORE       = {364},
  publisher    = {Schloss Dagstuhl - Leibniz-Zentrum für Informatik},
  year         = {2026},
  url          = {https://doi.org/10.4230/LIPIcs.STACS.2026.12},
}

@article{BDD,
	author = {Ganian, Robert and Klute, Fabian and Ordyniak, Sebastian},
	journal = {Algorithmica},
	number = {1},
	title = {On Structural Parameterizations of the Bounded-Degree Vertex Deletion Problem},
	volume = {83},
	year = {2021},
	url = {https://doi.org/10.1007/s00453-020-00758-8},
	doi_IGNORE = {10.1007/s00453-020-00758-8},
}

@article{AsahiroJMO15,
  author       = {Yuichi Asahiro and
                  Jesper Jansson and
                  Eiji Miyano and
                  Hirotaka Ono},
  title        = {Graph Orientations Optimizing the Number of Light or Heavy Vertices},
  journal      = {Journal of Graph Algorithms and Applications},
  volume       = {19},
  number       = {1},
  pages        = {441--465},
  year         = {2015},
  url          = {https://doi.org/10.7155/jgaa.00371},
  doi_IGNORE          = {10.7155/JGAA.00371}
}

@article{BodlaenderOO18,
  author       = {Hans L. Bodlaender and
                  Hirotaka Ono and
                  Yota Otachi},
  title        = {Degree-Constrained Orientation of Maximum Satisfaction: Graph Classes
                  and Parameterized Complexity},
  journal      = {Algorithmica},
  volume       = {80},
  number       = {7},
  pages        = {2160--2180},
  year         = {2018},
  url          = {https://doi.org/10.1007/s00453-017-0399-9},
  doi          = {10.1007/S00453-017-0399-9}
}

@article{HKLOS20,
  author       = {Tesshu Hanaka and Ioannis Katsikarelis and Michael Lampis and Yota Otachi and Florian Sikora},
  title        = {Parameterized Orientable Deletion},
  journal      = {Algorithmica},
  volume       = {82},
  number       = {7},
  pages        = {1909--1938},
  year         = {2020},
  url          = {https://doi.org/10.1007/s00453-020-00679-6},
}

@article{PQ82,
  author =        {Jean{-}Claude Picard and Maurice Queyranne},
  journal =       {Networks},
  number =        {2},
  pages =         {141--159},
  title =         {A network flow solution to some nonlinear 0-1 programming problems, with applications to graph theory},
  volume =        {12},
  year =          {1982},
  url =           {https://doi.org/10.1002/NET.3230120206},
}

@techreport{Gol84,
  address_IGNORE =       {USA},
  author =        {Goldberg, Andrew V.},
  institution =   {University of California at Berkeley},
  title =         {Finding a Maximum Density Subgraph},
  year =          {1984},
}

@article{AJMO16,
  author       = {Yuichi Asahiro and Jesper Jansson and Eiji Miyano and Hirotaka Ono},
  title        = {Degree-Constrained Graph Orientation: Maximum Satisfaction and Minimum Violation},
  journal      = {Theory of Computing Systems},
  volume       = {58},
  number       = {1},
  pages        = {60--93},
  year         = {2016},
  url          = {https://doi.org/10.1007/s00224-014-9565-5},
  doi_IGNORE          = {10.1007/S00224-014-9565-5},
}

@inproceedings{Korhonen024,
  author       = {Tuukka Korhonen and Marek Sokolowski},
  editor       = {Bojan Mohar and Igor Shinkar and
                  Ryan O'Donnell},
  title        = {Almost-Linear Time Parameterized Algorithm for Rankwidth via Dynamic
                  Rankwidth},
  booktitle    = {Proceedings of the 56th Annual {ACM} Symposium on Theory of Computing,
                  ({STOC} 2024)},
  address_IGNORE_IGNORE = {Vancouver, Canada},
  pages        = {1538--1549},
  publisher    = {{ACM}},
  year         = {2024},
  url          = {https://doi.org/10.1145/3618260.3649732},
  doi_IGNORE          = {10.1145/3618260.3649732}
}

@article{CourcelleMR00,
  author       = {Bruno Courcelle and
                  Johann A. Makowsky and
                  Udi Rotics},
  title        = {Linear Time Solvable Optimization Problems on Graphs of Bounded Clique-Width},
  journal      = {Theory of Computing Systems},
  volume       = {33},
  number       = {2},
  pages        = {125--150},
  year         = {2000},
  url          = {https://doi.org/10.1007/s002249910009},
  doi_IGNORE          = {10.1007/S002249910009}
}

@article{GimaHKKO22,
  author       = {Tatsuya Gima and
                  Tesshu Hanaka and
                  Masashi Kiyomi and
                  Yasuaki Kobayashi and
                  Yota Otachi},
  title        = {Exploring the gap between treedepth and vertex cover through vertex
                  integrity},
  journal      = {Theoretical Computer Science},
  volume       = {918},
  pages        = {60--76},
  year         = {2022},
  url          = {https://doi.org/10.1016/j.tcs.2022.03.021},
  doi_IGNORE          = {10.1016/J.TCS.2022.03.021}
}

@article{DrangeDH16,
  author       = {P{\aa}l Gr{\o}n{\aa}s Drange and
                  Markus S. Dregi and
                  Pim van 't Hof},
  title        = {On the Computational Complexity of Vertex Integrity and Component
                  Order Connectivity},
  journal      = {Algorithmica},
  volume       = {76},
  number       = {4},
  pages        = {1181--1202},
  year         = {2016},
  url          = {https://doi.org/10.1007/s00453-016-0127-x},
  doi_IGNORE          = {10.1007/S00453-016-0127-X}
}

@article{FrankT87,
  author       = {Andr{\'{a}}s Frank and
                  {\'{E}}va Tardos},
  title        = {An application of simultaneous Diophantine approximation in combinatorial
                  optimization},
  journal      = {Comb.},
  volume       = {7},
  number       = {1},
  pages        = {49--65},
  year         = {1987},
  url          = {https://doi.org/10.1007/BF02579200},
  doi_IGNORE          = {10.1007/BF02579200}
}

@article{Kannan87,
  author       = {Ravi Kannan},
  title        = {Minkowski's Convex Body Theorem and Integer Programming},
  journal      = {Mathematics of Operations Research},
  volume       = {12},
  number       = {3},
  pages        = {415--440},
  year         = {1987},
  url          = {https://doi.org/10.1287/moor.12.3.415},
  doi_IGNORE          = {10.1287/MOOR.12.3.415}
}

@article{Lenstra83,
  author       = {Hendrik W. Lenstra Jr.},
  title        = {Integer Programming with a Fixed Number of Variables},
  journal      = {Mathematics of Operations Research},
  volume       = {8},
  number       = {4},
  pages        = {538--548},
  year         = {1983},
  url          = {https://doi.org/10.1287/moor.8.4.538},
  doi_IGNORE          = {10.1287/MOOR.8.4.538}
}

@Book{BLS99,
  author        = {Andreas Brandst{\"a}dt and Van Bang Le and Jeremy P. Spinrad},
  title         = {Graph Classes: a Survey},
  publisher     = {{SIAM}},
  series        = {{SIAM Monographs on Discrete Mathematics and Applications}},
  volume        = {3},
  year          = {1999}
}

@inproceedings{GarvardtK24,
  author       = {Jaroslav Garvardt and
                  Christian Komusiewicz},
  editor       = {{\'{E}}douard Bonnet and
                  Pawel Rzazewski},
  title        = {Modularity Clustering Parameterized by Max Leaf Number},
  booktitle    = {Proceedings of the 19th International Symposium on Parameterized and Exact Computation ({IPEC} 2024)},
  address_IGNORE = {Egham, United Kingdom},
  series       = {LIPIcs},
  pages        = {16:1--16:14},
  publisher    = {Schloss Dagstuhl - Leibniz-Zentrum f{\"{u}}r Informatik},
  year         = {2024},
  url          = {https://doi.org/10.4230/LIPIcs.IPEC.2024.16},
  doi_IGNORE          = {10.4230/LIPICS.IPEC.2024.16}
}

@article{Eppstein15,
  author       = {David Eppstein},
  title        = {Metric Dimension Parameterized by Max Leaf Number},
  journal      = {Journal of Graph Algorithms and Applications},
  volume       = {19},
  number       = {1},
  pages        = {313--323},
  year         = {2015},
  url          = {https://doi.org/10.7155/jgaa.00360},
  doi_IGNORE          = {10.7155/JGAA.00360}
}

@article{CR05,
  author       = {Derek G. Corneil and
                  Udi Rotics},
  title        = {On the Relationship Between Clique-Width and Treewidth},
  journal      = {{SIAM} Journal on Computing},
  volume       = {34},
  number       = {4},
  pages        = {825--847},
  year         = {2005},
  url          = {https://doi.org/10.1137/S0097539701385351},
  doi_IGNORE          = {10.1137/S0097539701385351},
}

@article{CO00,
  author       = {Bruno Courcelle and
                  Stephan Olariu},
  title        = {Upper bounds to the clique width of graphs},
  journal      = {Discret. Appl. Math.},
  volume       = {101},
  number       = {1-3},
  pages        = {77--114},
  year         = {2000},
  url          = {https://doi.org/10.1016/S0166-218X(99)00184-5},
  doi_IGNORE          = {10.1016/S0166-218X(99)00184-5},
}

@inproceedings{EFGK2009,
  author       = {Rosa Enciso and
                  Michael R. Fellows and
                  Jiong Guo and
                  Iyad A. Kanj and
                  Frances A. Rosamond and
                  Ondrej Such{\'{y}}},
  title        = {What Makes Equitable Connected Partition Easy},
  booktitle    = {In Proceedings of the 4th International Workshop on Parameterized and Exact Computation ({IWPEC} 2009)},
  series       = {LNCS},
  volume_IGNORE       = {5917},
  pages        = {122--133},
  publisher    = {Springer},
  year         = {2009},
  address_IGNORE      = {Copenhagen, Denmark}
}

\newpage

\appendix

\section{Missing Proof Details}
\appendixProofs

\end{document}